\newtheorem{theorem}{Theorem}[section]
\newtheorem{lemma}[theorem]{Lemma}
\newenvironment{proof}[1][Proof]{\textsc{#1.} }{\ \rule{0.5em}{0.5em}}
\numberwithin{equation}{section}
\begin{document}

\title{Conformal classes of
asymptotically flat, static vacuum data.
}

\author{Helmut Friedrich\\ 
Max-Planck-Institut f\"ur Gravitationsphysik\\
Am M\"uhlenberg 1\\
14476 Golm, Germany}

\maketitle

\begin{abstract}
\noindent
{\footnotesize
We show that time-reflection symmetric,
asymptotically flat, static vacuum data
which admit a non-trivial conformal rescaling which leads again to such
data must be axi-symmetric and admit a conformal Killing field. Moreover, it
is shown that there exists a $3$-parameter family of such data. 
}
\end{abstract}

PACS: 04.20.Ex, 04.20.Ha, 04.20.Jb

{\footnotesize

\section{Introduction}

The metric of a static space-time assumes in suitable local coordinates
$t$, $\tilde{x}^a$, $a = 1, 2, 3$, the form
\begin{equation}
\label{staticlineel}
\tilde{g} = v^2\,d\,t^2 + \tilde{h},\quad \quad
v = v(\tilde{x}^c) > 0,\quad \quad
\tilde{h} =  \tilde{h}_{ab}(\tilde{x}^c)\,d\tilde{x}^a\,d\tilde{x}^b, 
\end{equation}
where $\tilde{h}$ denotes a negative definite metric on the time slices 
$\tilde{S}_c = \{t = c = const.\}$. The hypersurface orthogonal, time-like
Killing vector field is then given by $\partial_t$. We refer to 
$\tilde{h}$ as the {\it static metric} and to $v$ as the {\it potential}. 
Einstein's vacuum
field equations reduce here to the {\it static vacuum field
equations}
\begin{equation}
\label{statEinstvac}
R_{ab}[\tilde{h}] = \frac{1}{v}\,\tilde{D}_a\,\tilde{D}_b\,v,
\quad\quad
\Delta_{\tilde{h}}\,v = 0, 
\end{equation}
where $\tilde{D}$ denotes the covariant derivative defined by $\tilde{h}$.
It suffices to consider these equations on 
$\tilde{S} \equiv \tilde{S}_0$.

In the following we study solutions which are asymptotically flat so
that the coordinates $\tilde{x}^a$ can be required, for suitable  
$\tilde{S}$, to map the set
$\tilde{S}$  diffeomorphically onto the complement of a closed ball $B_R(0)$
in $\mathbb{R}^3$ so that the fields $\tilde{h}$, $v$ 
satisfy with some $\epsilon > 0$ and $k \ge 2$ 
the asymptotic flatness 
condition\footnote{The terms $O_k(|\tilde{x}|^{-(1 + \epsilon)})$ behave
like 
$O(|\tilde{x}|^{-(1 + \epsilon + j)})$ under differentiations of order $j \le
k$. }
\begin{equation}
\label{falloff}
\tilde{h}_{ac} = - \left(1 + \frac{2\,m}{|\tilde{x}|}\right)\,\delta_{ac} 
+ O_k(|\tilde{x}|^{-(1 + \epsilon)}),
\quad 
v = 1 - \frac{m}{|\tilde{x}|} + O_k(|\tilde{x}|^{-(1 + \epsilon)})
\quad \mbox{as} \quad |\tilde{x}| \rightarrow \infty,
\end{equation}
where $|\,.\,|$ denotes the standard Euclidean norm. We refer to 
$(\tilde{S}, \tilde{h}, v)$ as {\it static vacuum data}. 
The coefficient $m$ represents its ADM mass. In this article we shall
only be interested in data with mass $m \neq 0$.

The pair  $(\tilde{S}, \tilde{h})$ defines a particular type of 
time-reflection symmetric initial data for Einstein's vacuum field
equations.  Our interest in such initial data is motivated by observations
which suggest that evolutions in time of time-reflection symmetric,
asymptotically flat vacuum data admit  at null infinity conformal extensions
of prescribed smoothness if and only if the data behave  in terms of a
certain type of expansion at space-like infinity up to a certain order
like static data (cf. \cite{friedrich:cargese} for a detailed discussion).

The full analysis of this relation requires detailed information on the
asymptotic behaviour of static data and their conformal structures at
space-like infinity. In previous work (\cite{friedrich:statconv}) we have
given a complete characterization of static vacuum data in terms of a
minimal set of symmetric trace free tensors at space-like infinity,
referred to as {\it null data}. That article clarifies, in particular, the
convergence problem. In the present article we take a first step towards
characterizing conformal structures of static data. Obviously, such a
characterization should be more easy if static data are
related in a one to one fashion to their conformal structures.
It will be shown that the relation is in fact more complicated.

A precise description of our result requires technical 
details.
To keep the discussion short we refer the reader to \cite{friedrich:statconv}
for further details.  Beig and Simon (\cite{beig:simon})
showed under certain assumptions, weakened later by 
Kennefick and O'Murchadha (\cite{kennefick:o'murchadha}), that
static data with $m \neq 0$ admit conformal rescalings 
\begin{equation}
\label{tildeh-h}
\tilde{h}_{ab} \rightarrow h_{ab} = \Omega^2\,\tilde{h}_{ab},
\end{equation}
with positive conformal factors $\Omega$ so that the set 
$S = \tilde{S} \cup \{i\}$, obtained by adjoining to $\tilde{S}$ a point $i$
representing space-like infinity,
acquires a real analytic differentiable structure in which the
conformal metric $h$ extends to a real analytic
metric on $S$ and
$\Omega$ extends to a function in $C^2(S) \cap C^{\omega}(\tilde{S})$ so that
\begin{equation}
\label{0-D-DD_Omega-at-i}
\Omega > 0 \quad \mbox{on} \quad \tilde{S},
\quad \quad
\Omega = 0, \quad d\Omega = 0, \quad Hess_h \Omega = - 2\,h
\quad \mbox{at} \quad i. 
\end{equation}
A particular example of such a conformal factor, determined by the
static data themselves, is given by 
\begin{equation}
\label{preferred-gauge}
\Omega = \left(\frac{1 - v}{m}\right)^2.
\end{equation}

Let $(\tilde{S}, \tilde{h}, v)$ be a static vacuum data set with
$m \neq 0$ and the fields $h$, $\Omega$ on
$S$ related to it by (\ref{tildeh-h}), (\ref{preferred-gauge}). We say
that {\it this set admits a non-trivial conformal rescaling} if there exists 
another static vacuum data set $(\tilde{S}', \tilde{h}', v')$ with
associated fields $h'$, $\Omega'$ on $S'$ so that the following holds. 
Some neighbourhood $U'$ of $i'$ in $S'$ can be identified diffeomorphically
with some neighbourhood $U$ of $i$ in $S$, identifying $i'$ with $i$, so
that after identification there exists a {\it non-constant} smooth function
$\vartheta > 0$ on $U$ with
\begin{equation}
\label{h'-h-rel}
h' = \vartheta^4\,h.
\end{equation} 
In other words, we require the existence of a conformal diffeomorphism
which maps a neighbourhood of space-like infinity with respect to
$\tilde{h}$ onto such a neighbourhood with respect to $\tilde{h}'$ so that
it extends in our gauge smoothly to a conformal map which maps $i$ onto
$i'$. The metrics $\tilde{h}$, $\tilde{h}'$ are then related by
\begin{equation}
\label{stmetricconfrel}
\tilde{h}' = \theta^4\,\tilde{h} \quad \mbox{with} \quad 
\theta = \left(\frac{1 - v'}{m'}\right)^{-1}\,\vartheta\,\,
\,\frac{1 - v}{m}.
\end{equation}
In this article we investigate the question whether there exist static
vacuum data sets which admit non-trivial conformal rescalings.

The following general transformations map static vacuum
data sets onto such sets:\\
$-$ {\it Trival rescalings}
\begin{equation} \label{trivresc}
\tilde{h} \rightarrow \tilde{h}' = \theta^4\,\tilde{h}, \quad
v \rightarrow v' = v
\quad \mbox{with} \quad \theta = const. > 0. 
\end{equation}
Asymptotic flatness of $\tilde{h}'$ follows with the coordinate
transformation 
$\tilde{x}^a \rightarrow \tilde{x}^{a'} = \theta^2\,\tilde{x}^a$ 
in (\ref{falloff}), which shows
that the mass transforms as $m \rightarrow m' = \theta^2\,m$.
The corresponding conformal factor $\vartheta$ in (\ref{h'-h-rel}) is 
given by
$\vartheta = \frac{1 - v}{m'}\,\,\theta\,
\left(\frac{1 - v}{m}\right)^{-1} = \theta^{-1}$\\
$-$ The transitions
\begin{equation} 
\label{m--m}
(\tilde{S}, \tilde{h}, v) \rightarrow (\tilde{S}, 
\tilde{h}' = v^4\,\tilde{h}, v^{-1}).
\end{equation}
under which the sign of the mass changes, $m \rightarrow m' = - m$, and 
(\ref{h'-h-rel}) holds with $\vartheta = 1$. These
transitions are suggested by the conformal 
static field equations studied below
(cf. the remark following (\ref{Bvomsig})).
Without loss of generality it is therefore sufficient to consider the case
\begin{equation} 
\label{posmass}
m,\, m' > 0.
\end{equation}

If the metric $\tilde{h}$  is {\it conformally flat} with non-vanishing
mass it is necessarily induced by a Schwarzschild solution 
(\cite{friedrich:cargese}). In isotropic coordinates $\tilde{x}^a$ the data
are  
\[
\tilde{h} = 
- \left(1 + \frac{m}{2\,|\tilde{x}|}\right)^4
\delta_{ab}\,d\tilde{x}^a\,d\tilde{x}^b, \quad
v = \frac{1 - \frac{m}{2\,|\tilde{x}|}}{1 + \frac{m}{2\,|\tilde{x}|}}, 
\]
and (\ref{preferred-gauge}) gives in the coordinates
$x^a = |\tilde{x}|^{-2}\,\tilde{x}^a$, which 
are $h$-normal coordinates centered at $i$,
\[
h = - \delta_{ab}\,dx^a\,dx^b, \quad 
\Omega =  \left(\frac{|x|}{1 + \frac{m}{2}\,|x|}\right)^2.
\]

The transition
$\tilde{h}
\rightarrow
\theta^4\,\tilde{h}$,
$v \rightarrow v'$ with
$\theta = \frac{1 + \frac{m'}{2\,|\tilde{x}|}}
{1 + \frac{m}{2\,|\tilde{x}|}}$, 
$v' = \frac{1 - \frac{m'}{2\,|\tilde{x}|}}{1 + \frac{m'}{2\,|\tilde{x}|}}$, 
which maps a Schwarzschild metric $\tilde{h}$ with mass $m$ onto such
a metric with mass $m'$, corresponds to a trivial rescaling combined with a
coordinate transformation. In terms of
$h$ this rescaling is given by (\ref{stmetricconfrel}) with $\vartheta =
1$ and the information on the difference between the solutions is encoded in
the conformal factors $\Omega$ and $\Omega'$. 
The conformally flat case is special
in admitting the $3$-parameter group of {\it special conformal
transformations} 
\[
x^a \rightarrow \frac{x^a + d^a\,x_c\,x^c}
{1 + 2\,d_c\,x^c + d_c\,d^c\,x_e\,x^e},
\quad \quad d^a = const. \in \mathbb{R}^3,
\]
as local, non-trival, conformal maps of $h$ which leave $i$ fixed. 
In terms of the coordinates $\tilde{x}^a$ the map above is given by the
simple translation $\tilde{x}^a \rightarrow \tilde{x}^a + d^a$.
Conversely, the translations $x^a \rightarrow x^a + d^a$ is represented in
terms of the coordinates $\tilde{x}^a$ by a special conformal
transformation which maps a neighbourhood of infinity onto a punctured
neighbourhood of the point $|d|^{-2}\,d^a$.
In the following we shall be interested in static
data which are not conformally flat.

The question posed above has been considered by Beig
(\cite{beig:1991}). He defines a certain quantity $Q$, quadratic in
the Cotton tensor and its covariant derivatives up to second order, and
shows that $h$ cannot admit non-trivial rescalings if $Q \neq 0$ at $i$.
As also pointed out in \cite{beig:1991}, this
condition excludes axi-symmetric static data. Unfortunately it is not
clear what else is excluded.  
We wish to control the situation without imposing conditions and want to
decide whether there do exist static, 
conformally non-flat vacuum data 
that admit non-trivial conformal rescalings. The lemmas
proven in this article imply the following.

\begin{theorem}
\label{exist3parameter-set}
Suppose $(\tilde{S}, \tilde{h}, v)$ are static vacuum data with
ADM mass $m \neq 0$. If they admit a non-trival conformal rescaling, 
then $\tilde{h}$
admits a conformal Killing field $Y$ which is neither homothetic nor a
Killing field for $\tilde{h}$ and, in addition, a Killing field $X$ which
defines an infinitesimal axi-symmetry. The fields
$Y$ and
$X$ commute, are orthogonal to each other, and $Y$ is tangent to the axis
defined by $X$ which passes through the point representing space-like
infinity. Furthermore, if $\tilde{h}$ is not conformally flat it has a
non-vanishing quadrupole moment.

There exists a 3-parameter family of static vacuum data with
$m \neq 0$ which are not conformally flat and which do admit
non-trival conformal rescalings. 
The associated space-time
metrics are of Petrov type $D$ along the axis and of Petrov type $I$ on an
open neighbourhood surrounding the axis.
\end{theorem}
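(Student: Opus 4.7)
The plan is to translate the existence of a non-trivial conformal rescaling into an overdetermined system for the factor $\vartheta$ and then read off the symmetries of $\tilde{h}$. Working in the gauge (\ref{preferred-gauge}), the primed data set gives a triple $(h',\Omega',v')$ on a common neighbourhood $U$ of $i$, and the hypothesis $h' = \vartheta^4\,h$ together with (\ref{0-D-DD_Omega-at-i}) forces $\Omega' = \vartheta^{-2}\,\Omega$ and determines $v'$ algebraically from $\vartheta$, $v$, and the constants $m,m'$. Inserting these relations into the conformal form of the static vacuum equations (\ref{statEinstvac}) and subtracting the unprimed equations yields a closed, overdetermined PDE system for $\vartheta$ whose principal part depends on $h$ alone; the Hessian normalization in (\ref{0-D-DD_Omega-at-i}) applied to both $h$ and $h'$ pins down the $1$- and $2$-jets of $\vartheta$ at $i$.

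The central step is to extract the conformal Killing field $Y$ from this system. I would first linearize about a trivial rescaling $\vartheta \equiv \vartheta(i)$ and show that the linearization reduces to the conformal Killing equation for a vector field $Y$ whose flat covector at $i$ agrees with a multiple of $d\vartheta$. Rigidity of the conformal static equations in the Beig--Simon representation, together with the real-analyticity of $h$ at $i$ provided by \cite{beig:simon,kennefick:o'murchadha}, should then allow one to integrate the linearized symmetry to a local one-parameter group $\varphi_s$ of conformal diffeomorphisms of $(U,h)$ fixing $i$, with generator $Y$. Non-constancy of $\vartheta$ excludes $Y \equiv 0$; the trivial rescalings (\ref{trivresc}) account for the homothetic branch on $\tilde{h}$, so a genuinely non-trivial $\vartheta$ produces a $Y$ that is neither Killing nor homothetic for $\tilde{h}$. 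The condition $Y(i) = 0$ comes from the vanishing of $d\vartheta|_i$ enforced by matching the $1$-jet of $\Omega' = \vartheta^{-2}\,\Omega$ to the normalization (\ref{0-D-DD_Omega-at-i}).

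The axisymmetric Killing field $X$ is then produced by examining how $\varphi_s$ acts on the null data of $\tilde{h}$ at $i$ in the sense of \cite{friedrich:statconv}. Since those data encode $\tilde{h}$ up to isometry and $\varphi_s$ preserves the static structure, at each order the $\mathbb{R}$-orbit of the null data carries a residual $SO(2)$ isotropy; its infinitesimal generator is $X$. Commutativity $[X,Y] = 0$ is forced by $X$ being an isotropy of the $Y$-orbit, while orthogonality of $X$ and $Y$ and the tangency of $Y$ to the axis through $i$ follow from the normal form of an $SO(2) \times \mathbb{R}$ action near a fixed point on a $3$-manifold. Existence of the $3$-parameter family is then obtained by writing the resulting axisymmetric static vacuum equations in Weyl--Papapetrou form: they reduce to Laplace's equation for a single axisymmetric harmonic potential plus a quadrature, and imposing the additional symmetry generated by $Y$ cuts the infinite-dimensional Weyl class down to a $3$-parameter family of separated solutions. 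The quadrupole moment is computable directly from this potential and is generically non-zero, with Schwarzschild as the unique conformally flat member. For the Petrov statement, axis-point symmetry forces the spacetime Weyl spinor into type $D$ normal form, while an off-axis computation of the principal null directions in the Weyl ansatz shows them to be pairwise distinct on an open neighbourhood, giving type $I$.

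The main obstacle is the passage from a single non-trivial $\vartheta$ to the one-parameter family $\varphi_s$: exponentiating the linearized Killing symmetry requires showing that the solution set of the overdetermined system carries a natural Lie-group structure, which in turn rests on exploiting the Beig--Simon conformal reformulation rigidly enough that every $\vartheta$ which solves the system at $i$ extends uniquely along a one-parameter family of solutions.
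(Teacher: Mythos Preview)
Your proposal contains a concrete error and an acknowledged but unfilled gap, and it diverges substantially from the paper's argument.

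The error is your claim that matching the $1$-jet of $\Omega' = \vartheta^{-2}\Omega$ to (\ref{0-D-DD_Omega-at-i}) forces $d\vartheta|_i = 0$ and hence $Y(i) = 0$. Since $\Omega$ and $d\Omega$ both vanish at $i$, the $1$-jet of $\vartheta^{-2}\Omega$ at $i$ is insensitive to $d\vartheta(i)$; no such constraint arises. In fact the paper's main case is precisely $D_a u(i) = c_a \neq 0$ (equivalently $d\vartheta(i)\neq 0$), and in that case the conformal Killing field satisfies $Y(i)\neq 0$; it is the Killing field $X$ that vanishes at $i$. Your picture of the action near $i$ is therefore inverted.

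The gap you flag yourself---promoting a single non-trivial $\vartheta$ to a one-parameter group $\varphi_s$ by exponentiating a linearized symmetry---is the heart of the matter, and the paper does not attempt it. Instead the paper works with $u = \nu\,\vartheta^{-2}$, derives for it the overdetermined system (\ref{Lapvan}), (\ref{DDvan}), and from the integrability condition (\ref{biguid}) together with its analogue (\ref{Bcott}) for $\rho$ obtains the algebraic form $s_{ab} = \beta\,(w_a w_b - \tfrac{1}{3}h_{ab}\,w_c w^c)$ with $w = (1-u)/(1-\rho)$. The fields $X$ and $Y$ are then written down explicitly: $X^a = l\,\epsilon^{abc}\,u_b\,\rho_c$ with a suitable integrating factor $l=l(w)$, and $Y^a = f\,w^a$ with $f = (l/l_*)(1-\rho)^2$. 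The Killing and conformal Killing properties, orthogonality, $[X,Y]=0$, and the axis behaviour are checked by direct computation using (\ref{n1fequ}), (\ref{DDvan}), and the algebraic form of $s_{ab}$. No linearization or group action on null data is invoked.

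For existence, your Weyl--Papapetrou reduction is plausible in spirit but not what is done: the paper packages the conformal static equations \emph{together with} (\ref{DDvan}) and the equation for $\beta$ into a single exterior differential system ${\bf \Psi}=0$ in the unknowns $(\rho,\rho_a,s,u,u_a,t,\beta,\sigma^a,\omega^a{}_b)$, verifies closure $d{\bf\Psi}^A = f^A{}_B\wedge{\bf\Psi}^B$, and solves it by radial ODE integration in $h$-normal coordinates from the data $(\mu,\nu,c,\beta_*)$ at $i$. The three parameters are $\nu$, $|c|$, and $\beta_*$ (with $\mu$ absorbed by trivial rescaling), and the non-vanishing quadrupole $s_{ab}(i) = \beta_*(c_a c_b - \tfrac{1}{3}h_{ab}\,c_d c^d)$ is immediate from this initial data, not from a separate computation in a Weyl potential.
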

 
This result would provide complete information about the non-conformally
flat static data with non-vanishing mass which admit non-trivial rescalings,
were it not for an unanswered question. As discussed below, it is left open 
whether there exist data 
with non-vanishing quadrupole moment  which admit non-trivial rescalings with
$d \vartheta = 0$ at $i$.

Apart from this omission the result above represents a 3-dimensional
analogue of Brinkmann's theorem. Brinkmann studied the solutions of
Einstein's vacuum field equations in 4 space-time dimensions which admit
non-trivial conformal rescalings that yield new vacuum fields (asymptotic behaviour, however,  played no role   
in these studies). He found
them to be given by the solutions which have later been named {\it vacuum
pp-waves}  (\cite{brinkmann},
\cite{stephani:et:al}).  These solutions are of Petrov type $N$. We note
that the rescalings of the static data considered in this article do not
extend to conformal rescalings of the correponding static vacuum
space-times. 

Details of the static data whose existence has been shown here
will be discussed elsewhere.

\section{Conformal static vacuum field equations}

The derivation of the following equations has
been discussed (in terms of $h$ and $\zeta = \rho/\mu$) in detail in
\cite{friedrich:statconv}. Using the conformal metric
$h$ defined by (\ref {tildeh-h}), (\ref{preferred-gauge}) and the function
\begin{equation}
\label{Avomsig}
\rho = \left(\frac{1 - v}{1 + v}\right)^2,
\end{equation}
one obtains the static vacuum equations (\ref{statEinstvac}) in the form
\begin{equation}
\label{n1fequ}
0 = \Sigma_{ab}[h, \mu] \equiv D_a D_b \rho -
s\,h_{ab} + \rho\,(1 - \rho)\,s_{ab}
\quad \mbox{with} \quad 
s = \frac{1}{3}\,\Delta_h \rho,
\end{equation}
\begin{equation}
\label{srtsigmaequ}
(\Delta_h - \frac{1}{8}\,R[h])\,
[\frac{1}{\sqrt{\rho}}] = \frac{4\,\pi}{\sqrt{\mu}}\,\delta_i.
\end{equation} 
Here $D$ denotes the covariant derivative defined by $h$, 
and the tensor
\[
s_{ab} = R_{ab}[h],
\]
is the Ricci tensor of $h$. It is trace
free because the Ricci scalar of $h$  satisfies in the conformal gauge defined by 
(\ref{preferred-gauge})
\begin{equation}
\label{gaugechar}
R[h] = 0.
\end{equation}
We note that the tensor $\frac{m}{2}\,s_{ab}(i)$ at space-like infinity
represents the quadrupole moment of the static metric $\tilde{h}$. 
We set 
\begin{equation}
\label{mudef}
\mu = \frac{m^2}{4},
\end{equation}
and denote by $\delta_i$ the Dirac measure
with weight one and support $\{i\}$, given by the standard Dirac measure
$\delta_0$ in $h$-normal coordinates $x^a$ centered at $i$. In
such coordinates the fields $h$ and $\rho$ are real analytic and
satisfy 
\begin{equation}
\label{sigmaval}
\rho = 0,\,\,\,\,\,\,D_a\rho = 0,\,\,\,\,\,\,
D_a D_b \rho = - 2\,\mu\,h_{ab}
\quad \mbox{at} \quad i.
\end{equation}

The function $\rho$ can be characterized as the unique real analytic solution of
(\ref{srtsigmaequ}),  (\ref{sigmaval}) or as the unique real analytic function satisfying 
(\ref{sigmaval}) and
\begin{equation}
\label{an2fequ}
2\,\rho\,s = D_a\rho\,D^a\rho,
\end{equation}
which is a rewrite of (\ref{srtsigmaequ}).
This equation can be shown (\cite{friedrich:cargese}) to be in fact a consequence
of (\ref{n1fequ}) and (\ref{sigmaval}) so that the essential information on the
static field equations is encoded in (\ref{n1fequ}), (\ref{sigmaval}).

In going from $h$, $\rho$ back to $\tilde{h}$ and $v$, one has to choose the sign
of the square root. In the case of positive mass the correct formulas are
\begin{equation}
\label{Bvomsig}
\Omega = \frac{\rho}{\mu\,(1 + \sqrt{\rho})^2},
\,\,\,\,\,\,\,\,\,
v = \frac{1 - \sqrt{\rho}}{1 + \sqrt{\rho}}.
\end{equation}
Replacing here the square root by its negative
amounts to the transition (\ref{m--m}).

\vspace{.3cm}

The system (\ref{n1fequ}) implies the integrability conditions
\begin{equation}
\label{B1intco}
0 = \frac{1}{2}\,D^e\,\Sigma_{ea} =
D_a\,s + (1 - \rho)\,s_{ab}\,D^b\,\rho,
\end{equation}
and 
\begin{equation}
\label{Bcott}
0 = 
\frac{1}{\rho}\,(D_{[c}\, \Sigma_{a]b} + 
\frac{1}{2}\,D^e\,\Sigma_{e[c}\,h_{a]b})
= (1 - \rho)\,D_{[c}s_{a]b} 
- 2\,D_{[c}\rho\,s_{a]b} 
- D^d\,\rho\,s_{d[c}\,h_{a]b},
\end{equation}
which extends by analyticity also to $i$. 

\vspace{.3cm}

While the static vacuum data are subject to a
rescaling (\ref{h'-h-rel}), the
transformation of the potential was left unspecified. As pointed out above,
the potential, represented by the
function $\rho$, is determined in the asymptotically flat case uniquely by
$\mu$ and $h$. The new potential should thus be given in terms of $\rho$,
the conformal factor $\vartheta$, and the new mass term $\mu'$.
\begin{lemma}
\label{rescfundsol}
Assume that $(S, h, \rho)$ is derived from a static vacuum data set with mass
$m > 0$ as discussed above and  $\vartheta > 0$ is a $C^2$ function so that
$h' = \vartheta^{4}\,h$ is the conformal metric associated with static
vacuum data $\tilde{h}'$, $v'$ on $S \setminus \{i\}$ with mass $m' >
0$. Set 
\begin{equation}
\label{nufindef}
\nu = \frac{m^2}{m'^2\,\vartheta(i)^4}.
\end{equation} 
Then the function  
\begin{equation}
\label{fsolcontraf}
\rho' = \frac{1}{\nu}\left(\frac{\vartheta}{\vartheta(i)}\right)^2\,\rho,
\end{equation}
satisfies 
\begin{equation}
(\Delta_{h'} - \frac{1}{8}\,R[h'])[\frac{1}{\sqrt{\rho'}}] = 
\frac{4\,\pi}{\sqrt{\mu'}}\,\delta_i,
\end{equation}
and relations (\ref{sigmaval}) with $\rho$, $\mu$, $D$ replaced by $\rho'$,
$\mu'$ and the covariant derivative operator $D'$ defined by $h'$.
The function $\rho'$ agrees with the one given by (\ref{Avomsig}) with $v$
replaced by $v'$.
\end{lemma}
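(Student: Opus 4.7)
The plan is to verify three things: (a) the distributional conformal Laplacian equation for $\rho'$; (b) the analogues of (\ref{sigmaval}) at $i$ with $(\rho, \mu, D)$ replaced by $(\rho', \mu', D')$; and (c) agreement with $((1-v')/(1+v'))^2$. Everything hinges on the algebraic identity
\[
\frac{\vartheta}{\sqrt{\rho'}} \;=\; \frac{\sqrt{\nu}\,\vartheta(i)}{\sqrt{\rho}},
\]
which follows immediately from the definition (\ref{fsolcontraf}).

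For (a) I invoke the standard conformal covariance of the conformal Laplacian in three dimensions, $L_{h'}u = \vartheta^{-5}\,L_h(\vartheta u)$ with $L = \Delta - R/8$, together with $dV_{h'} = \vartheta^{6}\,dV_h$. Pairing $L_{h'}[1/\sqrt{\rho'}]$ against a test function $\phi$ with respect to $dV_{h'}$ and using self-adjointness of $L_{h'}$,
\[
\bigl\langle L_{h'}[1/\sqrt{\rho'}],\,\phi\bigr\rangle_{h'} \;=\; \int \frac{\vartheta}{\sqrt{\rho'}}\,L_h(\vartheta\phi)\,dV_h \;=\; \sqrt{\nu}\,\vartheta(i)\int \frac{1}{\sqrt{\rho}}\,L_h(\vartheta\phi)\,dV_h,
\]
the last step by the identity above. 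Applying hypothesis (\ref{srtsigmaequ}) reduces this to $\tfrac{4\pi\sqrt{\nu}\,\vartheta(i)^2}{\sqrt{\mu}}\phi(i)$; demanding equality with $\tfrac{4\pi}{\sqrt{\mu'}}\phi(i)$ pins $\nu$ down to $\mu/(\mu'\,\vartheta(i)^4) = m^2/(m'^2\,\vartheta(i)^4)$, which is exactly (\ref{nufindef}).

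For (b), differentiating $\rho' = \nu^{-1}(\vartheta/\vartheta(i))^2\rho$ at $i$ and using $\rho(i) = 0$, $D_a\rho(i) = 0$, the product rule immediately gives $\rho'(i) = 0$ and $D_a\rho'(i) = 0$; since the first derivatives of $\rho'$ then vanish at $i$, the Christoffel correction in passing from $D_aD_b\rho'$ to $D'_aD'_b\rho'$ contributes nothing there either, and the only surviving piece is
\[
D'_a D'_b\rho'\big|_i \;=\; \nu^{-1}\,D_a D_b\rho\big|_i \;=\; -\frac{2\mu}{\nu}\,h_{ab}(i) \;=\; -\frac{2\mu}{\nu\,\vartheta(i)^4}\,h'_{ab}(i) \;=\; -2\mu'\,h'_{ab}(i),
\]
the last equality again using the defining formula for $\nu$. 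For (c), the hypothesis that $(\tilde{h}', v')$ is static vacuum data with mass $m'$ assigns to it, via the construction of Section~2, a real-analytic function $\rho'_{\mathrm{can}} = \bigl((1-v')/(1+v')\bigr)^2$ which satisfies precisely the PDE and boundary data that (a), (b) have just established for $\rho'$; the uniqueness asserted immediately after (\ref{sigmaval}) then gives $\rho' = \rho'_{\mathrm{can}}$.

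The step requiring the most care is (a): the change of volume form and the rescaling of the operator must both be tracked, and their combined effect at $i$ supplies the factor $\vartheta(i)^2$ that dictates the power $\vartheta(i)^{-4}$ appearing in (\ref{nufindef}). Once this is done, (b) is a direct product-rule computation and (c) is a straightforward appeal to uniqueness.
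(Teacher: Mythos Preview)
Your proof is correct and follows the same overall strategy as the paper: conformal covariance of $L_g = \Delta_g - \tfrac18 R[g]$ for part~(a), a direct second-order expansion at $i$ for part~(b), and the uniqueness statement after (\ref{sigmaval}) for part~(c). The only genuine difference is in how the Dirac delta is handled in~(a): the paper computes the Jacobian of the transition between $h$- and $h'$-normal coordinates (equation (\ref{ncoordtrafo})) and invokes the pullback law $f^*\delta_0 = \vartheta(i)^6\,\delta_{0'}$, whereas you pair against test functions with respect to $dV_{h'} = \vartheta^6\,dV_h$ and use self-adjointness of $L_{h'}$; the factor $\vartheta(i)^6$ is the same in both cases, just extracted from the volume-form change rather than from a coordinate Jacobian. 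Your packaging is slightly more coordinate-free, while the paper's makes the convention for $\delta_i$ (as $\delta_0$ in normal coordinates) explicit, but the content is identical.
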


\begin{proof}
Let $x^a$ resp. $x^{a'}$ denote $h$- resp. $h'$-normal coordinates
centered at $i$. A calculation then shows that the system $x^{a'}$ satisfies,
possibly after a rotation around the origin, the relation
\begin{equation}
\label{ncoordtrafo}
x^{a'} 
= \delta^{a'}\,_a \left\{\vartheta(i)^2\,x^a 
- \vartheta(i)\,(\delta_{bc}\,\delta^{ad}
- 2\,\delta^a\,_b\,\delta^d\,_c)\,\vartheta_{,d}(i)\,x^a\,x^c \right\} 
+ O(|x^{a}|^3).
\end{equation}
Writing the transformation $x^a = x^a(x^{c'})$ shortly $x
= f(x')$, the transformation behaviour of Dirac distributions under
coordinate transformations implies with the relation above 
$f^*\,\delta_0 = |\det\,\frac{\partial f}{\partial x'}(i)|^{-1}\,\delta_{0'}
= \vartheta(i)^6\,\delta_{0'}$ (\cite{hoermander:I}). With the conformal
covariance of the conformal Laplacian and (\ref{srtsigmaequ}) it thus follows 
\[
f^*\left((\Delta_{h'} -
\frac{1}{8}\,R_{h'})[\frac{1}{\sqrt{\rho'}}]\right) = 
\sqrt{\frac{\mu}{\mu'}}\,\vartheta(i)^{-1}\,
f^*\left((\Delta_{h'} -
\frac{1}{8}\,R_{h'})[\frac{1}{\vartheta\,\sqrt{\rho}}]\right)
\]
\[
= \sqrt{\frac{\mu}{\mu'}}\,\vartheta(i)^{-1}\,
f^*\left(\vartheta^{-5}\,(\Delta_{h}
- \frac{1}{8}\,R_{h})[\frac{1}{\sqrt{\rho}}]\right)
= \frac{4\,\pi}{\sqrt{\mu'}}\,\delta_{0'}.
\]
The relations (\ref{sigmaval}) are verified by a direct calculation and the last
statement follows by the uniqueness property pointed out above.
\end{proof}

\vspace{.3cm}

The quantity $\nu$ is left unchanged under trivial
conformal rescalings. It has the following meaning. The conformal factor
$\theta$ has in the coordinates of (\ref{falloff}) in general an expansion
\[
\theta = \frac{1}{\vartheta(i)}\left( 1 + \frac{a}{|\tilde{x}|}
+ O\left(\frac{1}{|\tilde{x}|^2}\right) \right),
\]
with some coefficients $\vartheta(i) > 0$ and $a$. Rescaling the metric 
(\ref{falloff}) with $\theta$ one finds that the rescaled metric $\theta^4
\tilde{h}$ acquires the mass $m' = \vartheta(i)^{-2}(2\,a + m)$. Using
(\ref{Bvomsig}), (\ref{fsolcontraf}) in the expression (\ref{stmetricconfrel}) for
$\theta$, expanding $\rho$ using (\ref{0-D-DD_Omega-at-i}), and comparing
with the expression for $\theta$ above, we find
$2\,a = m\left(\nu^{-1/2} - 1\right)$ and thus
again (\ref{nufindef}). 
A change of mass is thus generated purely by a trivial rescaling 
if $\nu = 1$ but is partly due to an independent contribution if $\nu \neq
1$.
We finally note the expressions
\begin{equation}
\label{theta-v'}
\theta = \frac{\sqrt{\nu}\,\vartheta(i) +
\sqrt{\rho}\,\vartheta}{(1 +
\sqrt{\rho})\,\sqrt{\nu}\,\vartheta(i)^2},
\quad \quad
v' = 
\frac{\sqrt{\nu}\,\vartheta(i) - \vartheta 
+ (\sqrt{\nu}\,\vartheta(i) + \vartheta)\,v}
{\sqrt{\nu}\,\vartheta(i) + \vartheta 
+ (\sqrt{\nu}\,\vartheta(i) - \vartheta)\,v},
\end{equation}
for the conformal factor and the transformed potential. It follows that 
$\theta = 1$ if and only if $\nu = 1$ and $\vartheta = 1$, while
$v' = v$ is equivalent to $(\sqrt{\nu}\,\vartheta(i) - \vartheta)\,v^2
= (\sqrt{\nu}\,\vartheta(i) - \vartheta)$. This can hold only
if $v = 1$, which implies that $\tilde{h}_{ab}$ is flat, or if 
$\vartheta = \sqrt{\nu}\,\vartheta(i)$, which implies that $\nu = 1$,
$\vartheta = \vartheta(i)$, and $\theta = \vartheta(i)^{-1}$.

\section{The equations for the rescaling factor}

It will be convenient to replace  $\vartheta$ by
$\gamma\,\vartheta$ and assume  
\begin{equation}
\label{Anconfresc}
\vartheta(i) = 1, \quad \quad \gamma = const. > 0,
\quad \quad \nu = \frac{\mu}{\mu'\,\gamma^4},
\end{equation}
\begin{equation}
\label{Bnconfresc}
h' = \gamma^4\,\vartheta^{4}\,h,
\,\,\,\,\,\,\,\,\,\,\,\,
\rho' = \frac{1}{\nu}\,\vartheta^{2}\,\rho.
\end{equation}

To derive conditions on the scaling factors we express
\[
\Sigma_{ab}[h', \mu'] \equiv D'_a D'_b \rho' - s'\,h'_{ab}
+ \rho' (1 - \rho')\,s'_{ab},
\]
in terms of $h$ and $\rho$. 
With the general rescaling laws in $3$ dimensions 
\begin{equation}
\label{rictencontraf}
R_{ab}[\vartheta^4\,h] = R_{ab}[h] - 2\,\vartheta^{-1}\,D_aD_b\vartheta 
+ 6\,\vartheta^{-2}\,D_a\vartheta\,D_b\vartheta
- 2\,h_{ab}\left\{\vartheta^{-1}\,D_cD^c\vartheta 
+ \vartheta^{-2}\,D_c\vartheta\,D^c\vartheta\right\},
\end{equation}
\begin{equation}
\label{ricscalcontraf}
 \frac{1}{8}\,R[ \vartheta^4\,h]\,\vartheta^5 =
- \Delta_h\vartheta +  \frac{1}{8}\,R[h]\,\vartheta,
\end{equation}
where the right hand sides are expressed in terms of quantities derived from
$h$, a direct calculation gives
\[
\Sigma_{ab}[h', \mu'] = 
\gamma^{4}\,\vartheta^2\,\frac{\mu'}{\mu}\,\Sigma_{ab}[h,
\mu] +
\frac{8}{3}\,\gamma^{4}\,\frac{\mu'}{\mu}\,\left(\frac{1}{\nu}\, \vartheta^3
\rho^2 - \vartheta\,\rho\right)\, \Delta_h \vartheta\, h_{ab}
\]
\[
- \gamma^{4}\,\frac{\mu'}{\mu\,\nu^2}\,\vartheta^6\,
\rho^2\left( D_a D_b u - \frac{1}{3}\,\Delta_h u \,h_{ab}
+ u\,(1 - u)\,s_{ab}\right),
\]
where we set 
\begin{equation}
\label{udef}
u = \nu\,\vartheta^{-2},
\end{equation}
and used the resulting relation
\begin{equation}
\label{Laprewrite}
2\,u\,\Delta_h u - 3\,D_c u D^c u = - 
4\,\nu^2\,\vartheta^{-5}\,\Delta_h \vartheta.
\end{equation}
Equation (\ref{ricscalcontraf}) implies with $R[h] = 0$ and
$R[h'] = 0$ 
\begin{equation}
\label{Lapvardelvan}
\Delta_h\vartheta = 0.
\end{equation}
From these relations we read off the following basic condition.
\begin{lemma}
\label{basicscalecond}
Suppose $h$ and $\rho$ satisfy
(\ref{n1fequ}) and (\ref{sigmaval}) with some constant $\mu > 0$.
If $\mu', \gamma > 0$ are constants and $\vartheta$ a positive function with
$\vartheta(i) = 1$, then
$h' = \gamma^4\,\vartheta^{4}\,h$ and
$\rho' = \frac{1}{\nu}\,\vartheta^{2}\,\rho$ with 
$\nu = \frac{\mu}{\mu'\,\gamma^4}$
satisfy $\Sigma_{ab}[h', \mu'] = 0$ if and only if 
$u = \nu\,\,\vartheta^{-2}$ satisfies the overdetermined system
\begin{equation}
\label{Lapvan}
0 = \Pi[h, u] \equiv 2\,u\,t - D_c u D^c u 
\quad \mbox{with} \quad t = \frac{1}{3}\,\Delta_h u,
\end{equation}
\begin{equation}
\label{DDvan}
0 = \Pi_{ab}[h, u] \equiv D_a D_b u - t\,h_{ab}
+ u\,(1 - u)\,s_{ab}.
\end{equation}
Moreover, $u$ must satisfy the initial condition
\begin{equation}
\label{uati}
u(i) = \nu.
\end{equation}
\end{lemma}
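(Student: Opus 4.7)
The plan is a direct verification: expand $\Sigma_{ab}[h',\mu']$ using the conformal transformation laws, and read off the conditions on $u$ from the resulting identity, which is already displayed in the paragraph preceding the lemma statement.

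First I would carry out the main computation, namely expressing
\[
\Sigma_{ab}[h',\mu'] = D'_a D'_b\rho' - s'\,h'_{ab} + \rho'(1-\rho')\,s'_{ab}
\]
in terms of $h$-quantities. The inputs are the conformal change of connection acting on the scalar $\rho'$, the Ricci transformation laws (\ref{rictencontraf}) and (\ref{ricscalcontraf}), and the substitutions $\rho' = \nu^{-1}\vartheta^2\rho$ and $h'_{ab}=\gamma^4\vartheta^4 h_{ab}$. After straightforward but tedious rearrangement one arrives at
\[
\Sigma_{ab}[h',\mu'] = \gamma^4\vartheta^2\tfrac{\mu'}{\mu}\,\Sigma_{ab}[h,\mu] + \tfrac{8}{3}\gamma^4\tfrac{\mu'}{\mu}\!\left(\tfrac{1}{\nu}\vartheta^3\rho^2 - \vartheta\rho\right)\!\Delta_h\vartheta\,h_{ab} - \gamma^4\tfrac{\mu'}{\mu\nu^2}\vartheta^6\rho^2\,\Pi_{ab}[h,u],
\]
where $\Pi_{ab}[h,u]$ is precisely the tensor defined in (\ref{DDvan}). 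This identity is the central algebraic fact.

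Next, I would invoke the gauge condition. Since both the original and primed data are in the preferred conformal gauge, $R[h]=R[h']=0$, and (\ref{ricscalcontraf}) forces $\Delta_h\vartheta = 0$. The algebraic identity (\ref{Laprewrite}) shows this is equivalent to $\Pi[h,u]=0$, so equation (\ref{Lapvan}) is a direct consequence of the gauge on $h'$. With $\Sigma_{ab}[h,\mu]=0$ from the hypothesis and $\Delta_h\vartheta=0$ just established, the identity collapses to
\[
\Sigma_{ab}[h',\mu'] = -\gamma^4\tfrac{\mu'}{\mu\nu^2}\vartheta^6\rho^2\,\Pi_{ab}[h,u].
\]
Since $\vartheta>0$ and $\rho$ vanishes only at $i$, $\Sigma_{ab}[h',\mu']=0$ on $S\setminus\{i\}$ is equivalent to $\Pi_{ab}[h,u]=0$ there, and smoothness of $\Pi_{ab}$ extends the equivalence to $i$. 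Conversely, $\Pi[h,u]=0$ together with $\Pi_{ab}[h,u]=0$ and $\Sigma_{ab}[h,\mu]=0$ immediately kill all three terms on the right of the identity. The initial condition $u(i)=\nu$ is then immediate from $u=\nu\vartheta^{-2}$ and $\vartheta(i)=1$.

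The main obstacle is the first step: the bookkeeping in expanding $D'_a D'_b\rho'$ and $s'_{ab}$ produces a large number of terms mixing derivatives of $\rho$ and $\vartheta$, and the nontrivial observation is that after introducing $u=\nu\vartheta^{-2}$ they reassemble \emph{exactly} into a multiple of $\Sigma_{ab}[h,\mu]$, a pure-trace $\Delta_h\vartheta$ piece, and $\Pi_{ab}[h,u]$, with no residual cross terms. The auxiliary identity (\ref{Laprewrite}), which rewrites $\vartheta^{-5}\Delta_h\vartheta$ in terms of $u\,\Delta_h u$ and $D_cu\,D^cu$, is the key algebraic ingredient that makes this clean decomposition possible.
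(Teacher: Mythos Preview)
Your proposal is correct and follows essentially the same route as the paper: the argument is precisely the computation displayed immediately before the lemma, from which the conditions are then ``read off'' as the paper puts it. Your write-up just makes the reading-off explicit, including the use of the gauge condition $R[h']=0$ to obtain $\Delta_h\vartheta=0$ (whence (\ref{Lapvan}) via (\ref{Laprewrite})) and the observation that $\rho$ vanishes only at $i$ so that the equivalence with (\ref{DDvan}) extends there by continuity.
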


Using the Bianchi identity and the decomposition
$R_{dbca} = 2\,(h_{d[c}\,s_{a]b} - h_{b[c}\,s_{a]d})$,
which holds because $R[h] = 0$ and $dim(S) = 3$, one gets from (\ref{DDvan})
the integrability conditions
\begin{equation}
\label{contrid}
0 = \frac{1}{2}\,D^c \Pi_{ca} = 
D_a \,t + (1 - u)\,D^cu\,s_{ca}, 
\end{equation}
\begin{equation}
\label{biguid}
0 = \frac{1}{u}\left(D_{[c} \Pi_{a]b} 
+ \frac{1}{2}\,D^d \Pi_{d[c}\,h_{a]b}\right)
= (1 - u)\,D_{[c}\,s_{a]b}
- 2\,D_{[c}u\,s_{a]b} - D^du\,s_{d[c}\,h_{a]b}.
\end{equation}
The identity
\begin{equation}
\label{Piint}
D_a \Pi = u\,D^c \Pi_{ca} - 2\,D^cu \,\Pi_{ca},
\end{equation}
implies that (\ref{Lapvan}) will be satisfied by a solution $u$ of 
(\ref{DDvan}) if $\Pi(i) = 0$ i.e. if
\begin{equation}
\label{uDutindata}
u(i) = \nu > 0, \quad
2\,\nu\,t(i) = c_a\,c^a
\quad \mbox{with} \quad c_a = D_a u(i).
\end{equation}

\vspace{.2cm}

Let $x(\tau)$ denote a geodesic through $i$ with unit tangent vector
$\dot{x}(\tau)$. Transvecting equations (\ref{DDvan}),
(\ref{contrid}) suitably with
$\dot{x}$, one obtains a system of ODE's for $u$, $D_au$, $t$ along this 
curve, which shows that a solution $u$ of (\ref{DDvan}), (\ref{contrid}), if it
exists, must be analytic and uniquely determined by the data 
\[
u(i) = \nu > 0, \quad
\dot{u}(i) = \dot{x}^ac_a, \quad 
t(i) = \frac{1}{2\,\nu}\,c_a\,c^a.
\]
The function $u$ so obtained will in
general not satisfy the complete system
(\ref{DDvan}). It will be shown that the existence of
non-trivial solutions to equation (\ref{DDvan}) imposes strong restrictions
on the metric $h$. 
Because of the factor $1 - u$ in equations 
(\ref{DDvan}), (\ref{contrid}) it follows immediately that 
\begin{equation}
\label{nu1ca0}
u \equiv 1 \quad \mbox{if} \quad \nu = 1, \quad c_a = 0.
\end{equation}
The following result will be useful later.
\begin{lemma}
\label{genurhorel}
Let $u$ be a solution to (\ref{DDvan}),
(\ref{uDutindata}) on a 
neighbourhood of $i$ on which $\rho < a$ for some $a> 0$.   If
$u = F(\rho)$
with some function $F \in C^2([0, a[)$, then $u$ is the trivial solution 
$u = \nu$. Moreover, $h$ is flat unless $\nu = 1$. 
\end{lemma}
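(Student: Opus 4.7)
The plan is to convert the scalar identity $\Pi = 0$ into an ODE for $F$ as a function of the variable $\rho$, use the $C^{2}$ hypothesis on $F$ to force this ODE to admit only the constant solution, and then read the flatness conclusion (in the case $\nu \neq 1$) directly off (\ref{DDvan}).

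The first step exploits the integrability identity (\ref{Piint}). With $u = F(\rho)$ the chain rule gives $c_{a} = D_{a}u(i) = F'(0)\,D_{a}\rho(i) = 0$, so the hypothesis (\ref{uDutindata}) reduces to $t(i) = 0$ and hence $\Pi(i) = 0$. Since $\Pi_{ab} \equiv 0$ by assumption, (\ref{Piint}) yields $D_{a}\Pi = 0$, so $\Pi \equiv 0$, i.e.\ $2ut = D_{a}u\,D^{a}u$. Substituting $u = F(\rho)$, using (\ref{an2fequ}) in the form $D_{a}\rho\,D^{a}\rho = 2\rho s$ together with the identity $\Delta_{h}u = 3F'(\rho)\,s + 2F''(\rho)\,\rho s$, a short calculation gives
\[
\Pi \;=\; 2\,s\,\bigl[\,F F' + \tfrac{2}{3}\rho F F'' - \rho (F')^{2}\,\bigr].
\]
Because $s(i) = -2\mu \neq 0$ and $\rho$ (which has a strict minimum at $i$ since $D_{a}D_{b}\rho(i) = -2\mu\,h_{ab}$ with $h$ negative definite) attains a whole interval $[0,\varepsilon)$ of values on a small enough neighbourhood of $i$, the bracket must vanish identically in $\rho \in [0,\varepsilon)$.

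The heart of the argument is the analysis of this second-order ODE. Since $F(0) = \nu > 0$, $\ln F$ is $C^{2}$ near $\rho = 0$, and setting $W = (\ln F)' = F'/F$ reduces the bracket equation to the singular Riccati equation
\[
2\rho\, W' + 3W \;=\; \rho\, W^{2}.
\]
Evaluating at $\rho = 0$ forces $W(0) = 0$. Multiplying by the integrating factor $\rho^{1/2}/2$ rewrites the equation as $(\rho^{3/2}W)' = \tfrac{1}{2}\rho^{3/2}W^{2}$, so that $\Psi := \rho^{3/2}W$ solves $\Psi' = \Psi^{2}/(2\rho^{3/2})$ with $\Psi(0) = 0$. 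From $W \in C^{1}$ with $W(0) = 0$ one gets $\Psi(\rho) = O(\rho^{5/2})$ near $0$; on any subinterval where $\Psi \neq 0$, however, direct integration yields $\Psi(\rho)^{-1} = \rho^{-1/2} + \mathrm{const}$, i.e.\ $\Psi \sim \sqrt{\rho}$ as $\rho \to 0^{+}$. These two rates are incompatible, so $\Psi \equiv 0$ near $0$, whence $W \equiv 0$ and $F \equiv \nu$ near $0$; standard Picard uniqueness for the equation rewritten as $F'' = \tfrac{3(F')^{2}}{2F} - \tfrac{3F'}{2\rho}$ (regular for $\rho > 0$, $F > 0$) extends this to $F \equiv \nu$ on the whole interval, so $u \equiv \nu$.

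Substituting $u \equiv \nu$ back into (\ref{DDvan}) collapses its left-hand side, leaving $\nu(1-\nu)\,s_{ab} \equiv 0$. If $\nu \neq 1$ this forces $R_{ab}[h] = s_{ab} \equiv 0$, and in three dimensions a Ricci-flat metric is flat, giving the second assertion. The main obstacle I anticipate is precisely the uniqueness step for the ODE under merely $C^{2}$ regularity: a Taylor expansion at $\rho = 0$ kills coefficients one at a time but cannot, on its own, rule out a non-analytic $C^{2}$ deviation from the constant solution. The integrating-factor reformulation is what supplies the sharp rate contradiction that closes this gap.
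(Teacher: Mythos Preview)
Your proof is correct and follows essentially the same route as the paper's: both reduce the problem to the ODE $3FF' + 2\rho FF'' - 3\rho(F')^2 = 0$, integrate it explicitly, and show that the non-constant branch is incompatible with the prescribed behaviour at $\rho = 0$, then read off flatness from (\ref{DDvan}). The only cosmetic difference is that the paper integrates directly to $F = \rho/(b + d\sqrt{\rho})^2$ and notes $F(0) = 0 \neq \nu$, whereas you pass through the Riccati substitution $W = (\ln F)'$ and the integrating factor $\Psi = \rho^{3/2}W$ to reach the same contradiction via a rate comparison.
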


\begin{proof} Observing (\ref{an2fequ}) one gets 
\[
2\,u\,t - D_c u D^c u =
\frac{1}{3}\,D_a\rho\,D^a\rho\,
\left(\frac{3}{\rho}\,F\,F' + 2\,F\,F'' - 3\,(F')^{2}\right),
\]
so that (\ref{Lapvan}) is equivalent to the ODE
\[
\frac{3}{\rho}\,F\,F' + 2\,F\,F'' - 3\,(F')^{2} = 0.
\]
If $F' = 0$ at a point where $F > 0$ it follows that 
$u = F = const. = \nu$ which implies with
(\ref{DDvan}) that $s_{ab} = 0$ unless $\nu = 1$. 
Near a point where $F > 0$ and $F' \neq 0$ the equation above implies
$0 = \frac{3}{\rho} + 2\,\frac{F''}{F'} - 3\,\frac{F'}{F}
=   \left(\log \frac{\rho^3\,(F')^2}{F^3}\right)'$
whence 
$F = \frac{\rho}{(b + d\,\sqrt{\rho})^2}$
with $b, d = const. > 0$, $b \neq 0$.
But then $u(p) = F(\rho(p)) \rightarrow 0 \neq \nu$ as $p \rightarrow i$,
which contradicts our assumptions.
\end{proof}

\section{Implications for $h$}

Let $\tilde{h}$, $v$ be static data and $h$, $\rho$ the associated conformal
fields. We shall discuss now properties of $\tilde{h}$ and $h$
which are implied by the existence of a
{\it non-trivial} solution $u$ to (\ref{Lapvan}), (\ref{DDvan}), (\ref{uati}). 
Assume that 
\begin{equation}
\label{Udef} 
U \,\,\,\mbox{{\it is an $i$-centered, convex $h$-normal nbhd so
that}}\,\,\,0 < \rho < 1, \,\,D_a\rho \neq 0\,\,
\mbox{on}\,\,U \setminus \{i\},
\end{equation}
and set with a  given function $u$
\begin{equation}
\label{waform}
\rho_a = D_a\rho, \quad 
u_a = D_au, \quad
w = \frac{1 - u}{1 - \rho},
\quad
w_a = D_aw,
\quad 
U^* = \{p \in U|\,\,w_a(p) \neq 0\},
\end{equation}
so that $i \in U^*$ if and only if $u_a(i) \neq 0$.
We recall that $h$ is
conformally non-flat if and only if the set $\{p \in U|\,\,s_{ab}(p) \neq
0\}$ is dense in $U$ (\cite{friedrich:cargese}).
\begin{lemma}
\label{ricciexpr}
Let $h$, $\rho$ denote a solution to
(\ref{n1fequ}), (\ref{sigmaval}) which is not conformally flat. 
Suppose $u$ is a non-constant, positive solution to (\ref{Lapvan}),
(\ref{DDvan}), (\ref{uati}) on a set $U$ satisfying 
(\ref{Udef}) and define $w$ as in (\ref{waform}).
Then the set $U^*$ is dense in $U$ and there exists a  smooth function $\beta$ on
$U^*$ so that
\begin{equation}
\label{2ndbigindcond}
s_{ab}
= \beta\,(w_{a}\,w_b - \frac{1}{3}\,h_{ab}\,w_{c}\,w^c).
\end{equation}
If $V \subset  U^*$ is a connected, simply connected neighbourhood of a point
$p \in U^*$, there exist a constant $\beta_* \neq 0$ and a function $H =
H(w)$ defined on $V$ with
$H(w(p)) = 0$ so that the Ricci tensor has on $V$ the representation
\begin{equation}
\label{Ricexpr}
s_{ab} = \frac{\beta_*}{1 - \rho}\,e^H\,
(w_a\,w_b - \frac{1}{3}\,h_{ab}\,w_c\,w^c).
\end{equation}
If  $u_a(i) \neq 0$ we can choose $p = i$.
\end{lemma}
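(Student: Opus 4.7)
The plan is to combine the two Bianchi-type integrability conditions (\ref{Bcott}) and (\ref{biguid}) into a purely algebraic constraint on $s_{ab}$, deduce the eigenstructure (\ref{2ndbigindcond}) from it, and then use (\ref{Bcott}) once more together with a direct computation of $D_c w_a$ to show that $(1-\rho)\beta$ depends only on $w$ on any simply connected neighbourhood of a suitable point. For the first step I would multiply (\ref{Bcott}) by $(1-u)$ and (\ref{biguid}) by $(1-\rho)$ and subtract: differentiating the definition of $w$ gives $u_c = w\rho_c - (1-\rho)w_c$, so $(1-u)\rho_c - (1-\rho)u_c = (1-\rho)^2 w_c$, the derivative terms cancel, and the result is $2\,w_{[c}s_{a]b} + w^d s_{d[c} h_{a]b} = 0$ on $U$. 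Contracting this with $w^c$ and exploiting the symmetry of $s_{ab}$ in $a,b$ forces $A_b := w^c s_{cb}$ to be parallel to $w_b$; writing $A_b = f w_b$ and feeding back yields $w_c w^c\, s_{ab} = \tfrac{3f}{2}(w_a w_b - \tfrac{1}{3}h_{ab} w_c w^c)$. Since $h$ is negative definite, $w_c w^c < 0$ on $U^*$, so $\beta := 3f/(2\,w_c w^c)$ is smooth on $U^*$ and (\ref{2ndbigindcond}) holds.

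The set $U^*$ is open, and I claim it is dense in $U$. Otherwise $w_a$ would vanish on a nonempty open subset; but $h$, $\rho$ and $u$ are real analytic on $U$ (the real analyticity of $u$ follows from the ODE argument noted after (\ref{uDutindata})), hence so is $w = (1-u)/(1-\rho)$, and $w$ would be globally constant on $U$, say $w = c$. Then $u = 1 - c(1-\rho) = F(\rho)$ with $F \in C^2$, and Lemma~\ref{genurhorel} forces $u \equiv \nu$, contradicting the non-triviality of $u$.

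The core of the argument is a differential equation for $\beta$. Differentiating $(1-\rho)w_a = w\rho_a - u_a$ and eliminating $D_cD_a\rho$ and $D_cD_au$ via (\ref{n1fequ}) and (\ref{DDvan}), together with $1-u = (1-\rho)w$, one finds
\begin{equation*}
D_c w_a = \frac{\rho_c w_a + w_c \rho_a}{1-\rho} + \frac{ws - t}{1-\rho}\,h_{ca} + (1-\rho)\,w\,(1-w)\,s_{ca}.
\end{equation*}
Contracting with $w^a$ and using (\ref{2ndbigindcond}), the $\rho_c$-coefficient of $\alpha_c := w^a D_c w_a$ equals $w_ew^e/(1-\rho)$ while the remainder is a scalar multiple of $w_c$; hence $w_a\alpha_c - w_c\alpha_a = \frac{w_ew^e}{1-\rho}(w_a\rho_c - w_c\rho_a)$. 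Now substitute (\ref{2ndbigindcond}) into (\ref{Bcott}) and contract with $w^b$. Using $s_{ab}w^b = \tfrac{2}{3}\beta\,w_ew^e\,w_a$ on the $D\beta$- and right-hand-side terms, and the above formula for $w_{[a}\alpha_{c]}$ on the remaining piece, all $\alpha$- and $\rho$-contributions combine and the identity reduces to
\begin{equation*}
(1-\rho)(w_a D_c\beta - w_c D_a\beta) = \beta\,(w_a\rho_c - w_c\rho_a),
\end{equation*}
i.e.\ $w_{[a}\Psi_{c]} = 0$ with $\Psi_c := (1-\rho)D_c\beta - \beta\rho_c = (1-\rho)\beta\, D_c\log[(1-\rho)\beta]$. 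Thus $d\log[(1-\rho)\beta] \wedge dw = 0$ on $U^*$.

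On a connected, simply connected $V \subset U^*$, where $dw \neq 0$, this integrability condition gives $\log[(1-\rho)\beta] = G(w)$ for some smooth one-variable function $G$; setting $\beta_* := \exp(G(w(p)))$ and $H(w) := G(w) - G(w(p))$ produces the representation (\ref{Ricexpr}) with $H(w(p)) = 0$. Since $\{s_{ab} \neq 0\}$ is dense in $U$ (as $h$ is not conformally flat) and $w_a w_b - \tfrac{1}{3}h_{ab} w_c w^c$ is nonzero throughout $U^*$, $\beta$ is nonzero on a dense open subset of $U^*$, so $p$ may be chosen with $\beta(p) \neq 0$, yielding $\beta_* \neq 0$. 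Finally, evaluating $u_a = w\rho_a - (1-\rho)w_a$ at $i$ (where $\rho(i) = 0 = \rho_a(i)$) gives $u_a(i) = -w_a(i)$, so $u_a(i) \neq 0$ places $i \in U^*$ and one may then take $p = i$. The main technical obstacle is the cancellation producing the second displayed equation: the $\alpha$-, $\rho$- and curvature-contributions have to combine with precisely the right powers of $(1-\rho)$, and this algebraic coincidence is what singles out $(1-\rho)\beta$---rather than any other power $(1-\rho)^k\beta$---as the quantity depending only on $w$.
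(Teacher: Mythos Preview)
Your derivation of the algebraic constraint $2\,w_{[c}s_{a]b} + w^d s_{d[c}h_{a]b}=0$ and of (\ref{2ndbigindcond}) matches the paper. The difference lies in how the differential equation for $\beta$ is obtained: the paper uses the contracted Bianchi identity $D^a s_{ab}=0$ (which holds since $R[h]=0$) to derive a \emph{full} first--order equation
\[
D_a\beta \;=\; \beta\Bigl(\tfrac{1}{1-\rho}\,\rho_a + K\,w_a\Bigr),
\]
with $K$ smooth on $U^*$, whereas you contract the Cotton equation (\ref{Bcott}) with $w^b$ and obtain only the antisymmetric part $w_{[a}\bigl((1-\rho)D_{c]}\beta-\beta\,\rho_{c]}\bigr)=0$. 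Your computation of this antisymmetric relation is correct.

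The gap is in the conclusion $\beta_*\neq 0$. The lemma asserts the representation (\ref{Ricexpr}) with $\beta_*\neq 0$ for \emph{every} $p\in U^*$, which is equivalent to $\beta\neq 0$ throughout $U^*$. The paper's full equation gives this immediately: if $\beta(p)=0$ then $D_a\beta(p)=0$, and along any curve one has a linear homogeneous ODE, so $\beta\equiv 0$ on the connected component, whence $s_{ab}\equiv 0$ by (\ref{2ndbigindcond}) and analyticity, contradicting conformal non--flatness. Your antisymmetric relation alone does not exclude $\beta$ vanishing on a single level set $\{w=w_0\}$---an analytic function may vanish on a hypersurface---so the step ``$\log[(1-\rho)\beta]=G(w)$ on $V$'' already presupposes $\beta\neq 0$ on $V$, and the final remark ``$p$ may be chosen with $\beta(p)\neq 0$'' proves the claim only for a dense set of base points, not for all $p$. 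To close this, either invoke $D^a s_{ab}=0$ as the paper does, or extract from (\ref{Bcott}) a second independent contraction that pins down the component of $D\beta$ along $w$; the $w^b$--contraction alone determines $D\beta$ only up to an unknown multiple of $w_a$.
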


\begin{proof} If $w_a$ vanished on an open subset of $U$, 
$w$ would be constant on
$U$ because $u$ and $\rho$ are analytic.
It would follow that $u = \nu + \rho\,(1 - \nu)$ whence $u = const.$
by lemma \ref{genurhorel}, in conflict with our assumptions.
It follows that $U^*$ is dense in $U$. Because
\[
\frac{1}{u\,(1 - \rho)}\left(D_{[c} \Pi_{a]b} 
+ \frac{1}{2}\,D^d \Pi_{d[c}\,h_{a]b}\right)
=
\]
\[
\frac{1}{u\,(1 - \rho)}\left(D_{[c} \Pi_{a]b} 
+ \frac{1}{2}\,D^d \Pi_{d[c}\,h_{a]b}\right)
- \frac{1 - u}{\rho\,(1 - \rho)^2}\,(D_{[c}\, \Sigma^*_{a]b} + 
\frac{1}{2}\,D^e\,\Sigma^*_{e[c}\,h_{a]b})
\]
\[
= 2\,w_{[c}\,s_{a]b} + w^d\,s_{d[c}\,h_{a]b},
\]
equation (\ref{biguid}) holds on $U^*$ if and only if
\begin{equation}
\label{firstindcond}
2\,w_{[c}\,s_{a]b} + w^d\,s_{d[c}\,h_{a]b} = 0.
\end{equation}
Contraction with $2\,w^c$ gives   
\[
2\,w_{c}\,w^c\,s_{ab} - 2\,w_{a}\,s_{cb}\,w^c 
+ w^d\,s_{dc}\,w^c\,h_{ab} - w^d\,s_{da}\,w_{b} = 0.
\]
The antisymmetric part of this equation reads $w^d\,s_{d[a}\,w_{b]} = 0$,
which implies on $U^*$ 
\begin{equation}
\label{bigeigenvalue}
w^d\,s_{da} = \alpha\,w_a,
\end{equation}
with some function $\alpha$. Using this in the equation above, we obtain 
(\ref{2ndbigindcond}), which satisfies (\ref{firstindcond}) without restriction
on $\beta$.

With (\ref{n1fequ}) and (\ref{DDvan}) one obtains 
\begin{equation}
\label{DDw}
D_a w_b = J\,h_{ab}
+ (1 - u)\,(1 - w)\,s_{ab}
+ \frac{1}{1 - \rho}(w_a \rho_b + w_b \rho_a)
\quad \mbox{with} \quad
J = \frac{w\,s - t}{1 - \rho}.
\end{equation} 
It follows with (\ref{2ndbigindcond})
\[
D_a w^a = 3\,J + \frac{2}{1 - \rho}\,\rho_a\,w^a,
\]
\[
w^a\,D_b\,w_a = J\,w_b 
+ \frac{2}{3}\,(1 - u)\,(1 - w)\,\beta\,w_c\,w^c\,w_b
+ \frac{1}{1 - \rho}(w_a w^a \rho_b + w_b \,\rho_a w^a),
\]
\[
w^b\,w^a\,D_b\,w_a = \left\{J 
+ \frac{2}{3}\,(1 - u)\,(1 - w)\,\beta\,w_c\,w^c
+ \frac{2}{1 - \rho}\,\rho_a\, w^a \right\}w_c\,w^c.
\]
On $U^*$ the Bianchi identity and the gauge condition $R[h] =
0$ imply with (\ref{2ndbigindcond}) 
\[
0 = D^as_{ab} = w^aD_a\beta\,\,w_b - \frac{1}{3}\,D_b\beta\,\,w_c w^c
+ \beta\,(D_aw^a\,w_b  
+ \frac{1}{3}\,w^c D_bw_c).
\]
After contraction with $w^b$ this can be solved for $w^aD_a\beta$. Inserting
the resulting expression again  into the equation, gives
\begin{equation}
\label{dbeta}
w_cw^c\,\,D_a\beta =
\beta\,\left\{
- \frac{3}{2}\left(D_cw^c + \frac{w^bw^cD_b w_c}{w_cw^c}\right)w_a
+w^cD_aw_c \right\}
\end{equation}
\[
= \beta\,\left\{ - 5\,J\,w_a - \frac{5}{1 - \rho}\,\rho_c\, w^c\,w_a
-  \frac{1}{3}\,(1 - u)\,(1 - w)\,\beta\,w_c\,w^c\,w_a
+  \frac{1}{1 - \rho}\,w_c\,w^c\,\rho_a \right\}.
\]
and thus finally 
\begin{equation}
\label{betaequ}
D_a\beta = \beta \left\{
\frac{1}{1 - \rho}\,\rho_a + K\,w_a\right\},
\end{equation}
with 
\[
w_cw^c\,\,K = - 5\,J
- \frac{5}{1 - \rho}\,\rho_c\,w^c
- \frac{1}{3}\,(1 -  u)\,(1 - w)\,w_cw^c\,\beta.
\]

The relation $\beta(p) = 0$ would imply with (\ref{betaequ}) that
$\beta = 0$, whence $s_{ab} = 0$ on a neighbourhood of $p$ in
$U^*$  and thus $s_{ab} = 0$ on $U$ by analyticity, contradicting conformal
non-flatness. 
Thus $\beta(p) \neq 0$, whence $\beta \neq 0$ on $V$, and equation
(\ref{betaequ}) can be written there in the form
$D_a \log|(1 - \rho)\,\beta| = K\,D_aw$.
This implies that $D_{[a}K\,D_{b]}w = 0$, $K$ can be written as a function of
$w$ on $V$, and there exists a function $H = H(w)$ 
with $H(w(p)) = 0$ so that
$D_a (\log|(1 - \rho)\,\beta| - H) = 0$,
whence, with $\beta_* = (1 - \rho)\,\beta|_p \neq 0$, 
\begin{equation}
\label{betaexpr}
\beta = \frac{\beta_*}{1 - \rho}\,e^H
\quad \mbox{on} \quad V.
\end{equation} 
\end{proof}

\vspace{.3cm}

To state the following result, we note that lemma \ref{genurhorel} implies 
under the assumptions of  lemma \ref{ricciexpr} that
$\epsilon^{abc}\,u_b\,\rho_c \neq 0$ on a dense open subset of $U$.
\begin{lemma}
\label{existKilling}
Assume the notation and the assumptions of lemma \ref{ricciexpr}. If
$V$ is chosen so that $\epsilon^{abc}\,u_b\,\rho_c \neq 0$ on $V$, there exists a
function $l = l(w)$ on $V$ so that 
\begin{equation}
\label{Killingfield}
X^a = l\,\epsilon^{abc}\,u_b\,\rho_c,
\end{equation}
defines a non-trivial Killing field for
$h$ on $V$. 
It extends to an analytic, hypersurface orthogonal Killing field $X$
on $U$ which satisfies the relations
\begin{equation}
\label{XrhoXuXw}
X^a \rho_a = 0, \quad X^a u_a = 0,
\end{equation}
and is a Killing field for $\tilde{h}$ on $U \setminus \{i\}$.
Independent of the choice of V it is determined uniquely up to a
constant real factor.
It vanishes at $i$ and defines an infinitesimal axi-symmetry with closed
integral curves near $i$. If $u_a(i) \neq 0$ its axis is given by the 
$h$-geodesic $\gamma(\tau)$ with tangent vector $u^a(i)$ at $i$.
\end{lemma}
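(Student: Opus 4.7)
The plan is to construct $X$ via an Ansatz of the prescribed form and reduce the Killing equation to a single scalar ODE. On $V$ the gradients $u_a$ and $\rho_a$ are $h$-independent, and $w_a$ lies in their span by (\ref{waform}); hence setting $X^a = l\,\epsilon^{abc}u_b\rho_c$ yields (\ref{XrhoXuXw}) and $X^a w_a = 0$ for any smooth $l$. I impose $l=l(w)$ and solve $D_{(a}X_{b)} = 0$ for $l$.

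The reduction goes as follows. Use (\ref{n1fequ}) and (\ref{DDvan}) to replace $D_a\rho_b$ by $s\,h_{ab} - \rho(1-\rho)s_{ab}$ and $D_au_b$ by $t\,h_{ab} - u(1-u)s_{ab}$ in $D_{(a}X_{b)}$. After contraction with $\epsilon$, the $h_{ab}$-pieces produce antisymmetric tensors that vanish under the symmetrization on $(a,b)$. The $s_{ab}$-pieces simplify drastically because, by (\ref{Ricexpr}), $s_{ab}$ is rank-one aligned with $w^a$, which itself is a linear combination of $u^a$ and $\rho^a$; hence $s_{ab}u^b$ and $s_{ab}\rho^b$ are multiples of $w_a$, and the residual $\epsilon$-contractions with $w$ reproduce $X$. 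The $(D_a l)$-term equals $(l'(w)/l)\,w_{(a}X_{b)}$. Combining everything, $D_{(a}X_{b)} = \Phi(u,\rho,w)\,w_{(a}X_{b)}$ with a scalar $\Phi$ which, by (\ref{Ricexpr}) and (\ref{an2fequ}) (which express $\beta$, $s$, and $t$ as functions of $\rho$ and $w$), turns out to be a function of $w$ alone. The Killing equation therefore reduces to the first-order ODE $l'(w) = F(w)\,l(w)$ for an explicit $F$, determining $l$ up to a nonzero real constant.

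The resulting $X$ is real-analytic on $V$ and extends analytically and uniquely to all of $U$ as a Killing field, since the zero set of $\epsilon^{abc}u_b\rho_c$ is a proper analytic subset of $U$ and $l(w)$ is analytic. Uniqueness up to a constant follows because any Killing field of $h$ annihilating $du$ and $d\rho$ on an open set must be pointwise parallel to $\epsilon^{abc}u_b\rho_c$, with coefficient fixed by the Killing equation. Hypersurface orthogonality is the identity $X^\flat\wedge dX^\flat = 0$, which holds because $X^\flat\propto *(du\wedge d\rho)$ forces $dX^\flat$ into the ideal generated by $du,d\rho$.

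Since $\Omega$ is a function of $\rho$ alone by (\ref{Bvomsig}) and $X^a\rho_a = 0$, the same $X$ is a Killing field of $\tilde h = \Omega^{-2}h$ on $U\setminus\{i\}$. At $i$ one has $X(i)=0$ because $\rho_a(i)=0$. Using $D_a\rho_c(i)=-2\mu\,h_{ac}(i)$ from (\ref{sigmaval}), a direct computation gives $D_eX_a|_i = 2\mu\,l(i)\,\epsilon_{aeb}\,u^b(i)$, the skew generator of a rotation in $\mathfrak{so}(3)$ about the axis $u^a(i)$. Consequently, the orbits of $X$ near $i$ are closed circles, and the fixed axis is the $h$-geodesic through $i$ with tangent $u^a(i)$. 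The main obstacle throughout is the Step-2 computation showing that the scalar $\Phi(u,\rho,w)$ genuinely collapses to a function of $w$ alone; this is where the rigid rank-one form (\ref{Ricexpr}) of the Ricci tensor and the algebraic identity (\ref{an2fequ}) are indispensable, and it is the whole reason the overdetermined Killing equation admits any non-trivial solution at all.
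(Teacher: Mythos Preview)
Your overall strategy coincides with the paper's: make the Ansatz $X^a=l(w)\,\epsilon^{abc}u_b\rho_c$ and reduce the Killing equation to an ODE for $l$. The computation sketch for $D_{(a}X_{b)}$ is correct in spirit, and the ODE $l'/l=-\beta_*\,e^{H(w)}(w-w^2)$ is exactly what the paper obtains (cf.\ (\ref{lchoice})). Two points, however, are genuine gaps.

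\textbf{Hypersurface orthogonality.} Your argument that ``$X^\flat\propto *(du\wedge d\rho)$ forces $dX^\flat$ into the ideal generated by $du,d\rho$'' is vacuous: in three dimensions, once $du$ and $d\rho$ are independent, \emph{every} $2$-form lies in the ideal they generate, so nothing follows for $X^\flat\wedge dX^\flat$. The paper avoids this by computing the full covariant derivative, not just its symmetric part: with the chosen $l$ one gets $D_aX_b=\epsilon_{ab}{}^c\eta_c$ with $\eta_c$ a linear combination of $u_c$ and $\rho_c$ (see (\ref{Xetadef}), (\ref{DX})). Then $\epsilon^{abc}X_aD_bX_c=2\,X^c\eta_c=0$. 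An equivalent repair of your argument is to note that the $2$-plane distribution $X^\perp=\mathrm{span}(\nabla u,\nabla\rho)$ is involutive, since $[\nabla u,\nabla\rho]$ computed with (\ref{n1fequ}), (\ref{DDvan}) and (\ref{2ndbigindcond}) lies again in $\mathrm{span}(\nabla u,\nabla\rho,\nabla w)=\mathrm{span}(\nabla u,\nabla\rho)$.

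\textbf{Extension to $U$ and the value at $i$.} Saying ``$l(w)$ is analytic'' and extending the explicit formula does not work in the case $u_a(i)=0$, which the lemma must also cover. Then $i\notin U^*$, and $\beta$, $H$, $l$ are only defined on $V\subset U^*$ and may fail to extend across $\{w_a=0\}$; indeed, if $l$ remained bounded at $i$ while $u_a(i)=0$, your own formula for $D_eX_a|_i$ would give $D_aX_b(i)=0$ together with $X(i)=0$, forcing $X\equiv 0$. The paper extends $X$ to all of the convex normal neighbourhood $U$ by solving the Killing transport system $D_aD_bX_c=X_dR^d{}_{abc}$ as linear ODEs along geodesics from a fixed point $p\in V$; this needs no control on $l$. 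The vanishing $X(i)=0$ is then obtained independently of the formula by differentiating $\mathcal L_X\rho=0$ and using $D_aD_b\rho(i)=-2\mu\,h_{ab}$.

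A minor point: your appeal to (\ref{an2fequ}) to make $s,t$ ``functions of $\rho$ and $w$'' is unnecessary and not actually true; as you already observed, the $s,t$ terms land in antisymmetric pieces and drop out. The only reason $\Phi$ reduces to a function of $w$ alone is (\ref{betaexpr}), i.e.\ $(1-\rho)\beta=\beta_*\,e^{H(w)}$, which carries the whole weight of that step.
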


\begin{proof}
With the expression (\ref{Killingfield}) a direct
calculation using (\ref{n1fequ}), (\ref{DDvan}), (\ref{2ndbigindcond}),
(\ref{betaexpr}) gives on $V$
\[
D_aX_b = \epsilon_{ab}\,^c\,\eta_c
+ l^{-1}\left\{D_bl + l\,\beta_*\,e^H\,(w - w^2)\,w_b
\right\}X_a,
\]
where
\begin{equation}
\label{Xetadef}
\eta_c = \frac{1}{3}\,l\left\{\left(\Delta_h\rho 
+ \rho\,(1 - \rho)\,\beta\,w_d\,w^d\right)\,u_c
- \left(\Delta_hu
+ u\,(1 - u)\,\beta\,w_d\,w^d\right)\,\rho_c\right\}.
\end{equation}
With the choice
 \begin{equation}
\label{lchoice}
l = l_*\,e^{L(w)}
\quad \mbox{where $L$ satisfies} \quad L'(w) = - \beta_*\,e^H\,(w - w^2)
\quad \mbox{and} \quad l_* = const. > 0,
\end{equation}
this implies
\begin{equation}
\label{DX}
D_aX_b = \epsilon_{ab}\,^c\,\eta_c.
\end{equation}
It follows 
\begin{equation}
\label{hypsurforthoKilling}
D_{(a}X_{b)} = 0, \quad \epsilon^{abc}\,X_a D_b\,X_c = 0.
\end{equation}
Equations (\ref{XrhoXuXw}) are an immediate consequence of the definition of
$X$. Since then $X^aD_a\Omega = 0$ by (\ref{Bvomsig}), $X$ is also a Killing
field for the metric $\tilde{h}$.

To construct the extension of $X$ to $U$ we use the integrability
condition for the Killing equation, 
\begin{equation}
\label{Killingintcond}
D_aD_bX_c = X_d\,R^d\,_{abc},
\end{equation}
which is satisfied by $X$ on $V$ and to be
satisfied on $U$. Fix $p \in V$. The geodesics $\gamma(\tau)$ with
$\gamma(0) = p$ cover the convex normal neighbourhood $U$. The
ODE's
\begin{equation}
\label{Killingintcondequ}
D^2_{\dot{\gamma}}X_c = X_d\,R^d\,_{abc}\,\dot{\gamma}^a\,\dot{\gamma}^b,
\end{equation}
along these geodesics determine a unique analytic extension of
$X$ to $U$. The Killing
and the hypersurface orthogonality conditions (\ref{hypsurforthoKilling}) as well
as the relations (\ref{XrhoXuXw}) extend to
$U$ by analyticity. As above it follows that the extended field is also a Killing
field for $\tilde{h}$.

The relation $X^{[a}\,\epsilon^{b]cd}\,u_c\,\rho_d = 0$,
which holds on $V$, extends to $U$ by analyticity so that
\begin{equation}
\label{Xproexpl}
X^a \sim \,\epsilon^{acd}  u_c\,\rho_d \quad \mbox{on} \quad U.
\end{equation}
The factor $l$ which relates these fields on $V$ needs a priori not be bounded
on $U$ but up to a constant factor the Killing field $X^a$ is 
determined uniquely.

The relation  ${\cal L}_X\rho = 0$ on $U$ implies 
\begin{equation}
\label{DLcomm}
0 = D_a\,{\cal L}_X\rho = {\cal L}_X D_a\rho
= X^bD_bD_a\rho + D_aX^bD_b\rho,
\end{equation}
which reduces at $i$ to 
\[
0 = - 2\,\mu\,X_a(i).
\] 
It follows that $D_aX_b \neq 0$ at $i$, otherwise $X$ would vanish
identically. Any geodesic through $i$ is
mapped by the flow of
$X$ onto such geodesic because $i$ is a fixed point of the flow.
Because $dim(S) = 3$ and $D_aX_b$ is
anti-symmetric there exists a tangent vector $t^a \neq 0$ at $i$
with $t^a D_aX_b = 0$. This vector is
invariant under the flow of $X$. The
geodesic $\gamma(\tau)$ satisfying $\gamma(0) = i$ and $\dot{\gamma}(0) = t$
is thus pointwise invariant under the flow so that $X|_{\gamma(\tau)} = 0$
and the points of $\gamma(\tau)$ represent the axis of $X$. Because the
flow of
$X$ preserves orthogonality and maps geodesics onto such, it maps any
geodesic orthogonal to
$\gamma$ onto another such geodesic. Since it preserves affine
parameters it follows that the flow lines of $X$ are closed near $i$.

If $u_a(i) \neq 0$ the function $\beta$ given by  (\ref{betaexpr}) and thus the
function $l$ can be given on a neighbourhood of $i$ so that $l = l_*$ and 
$e^H = 1$ at $i$
and the expressions (\ref{Killingfield}) and (\ref{DX}) with (\ref{Xetadef}) can be
assumed to hold on this neighbourhood.
While then $X^a = 0$ at $i$, we have
$\eta_c(i) = - 2\,l_*\,u_c$,
so that $D_aX_b|_i = - 2\,l_*\,\epsilon_{ab}\,^c\,u_c(i) \neq 0$
and $u^a\, D_aX_b|_i = 0$. It follows that the axis is given by the geodesic 
with tangent vector $u^a$ at $i$. 
\end{proof}

\vspace{.3cm}

If $u_a(i) \neq 0$ we normalize $X$ by setting
\begin{equation} 
\label{normalizingX}
l_* = \frac{1}{2\,c}
\quad \mbox{with} \quad c = \sqrt{ - u_cu^c}|_i\,> 0
\quad \mbox{so that} \quad \eta_a \eta^a|_i = - 1. 
\end{equation}
It follows with (\ref{Killingintcond}) that
$\eta_a \eta^a = D_aX_b\,D^aX^b/2 = const. = - 1$ along the 
the axis, where $X$ vanishes. The flow of $X$ induces then rotations of the
tangent space
$T_iS$ with period $2\,\pi$ and not smaller. If $\phi$ denotes the natural
parameter on the integral curves of $X$ which vanishes on a hypersurface
orthogonal to $X$ which approaches the axis from one side, it defines a
coordinate with  hypersurfaces $\{\phi = const.\} \perp X$ and the integral
curves of
$X$ close exactly  if $\phi \in [0, 2\,\pi[$.

\begin{lemma}
\label{existsconf Killing}
Assume the notation and the assumptions of lemmas \ref{ricciexpr},
\ref{existKilling}. 
The field 
\begin{equation}
\label{conKillingexists}
Y^a = f\,w^a  \quad \mbox{with} \quad f = \frac{l}{l_*}\,(1 - \rho)^2,
\end{equation}
satisfies on $V$ 
\begin{equation}
\label{DK}
D_a Y_b = \omega\,h_{ab} + \frac{1}{l_*}\,\epsilon_{ab}\,^c\,X_c
\quad \mbox{with} \quad 
\omega = f\,(J - \frac{1}{3}(w - w^2)\,w_c w^c\,\beta_*\,e^H).
\end{equation}
It extends to an analytic, hypersurface orthogonal field $Y$
on $U$ satisfying the conformal Killing equation
\begin{equation}
\label{KconKil}
D_a Y_b + D_b Y_a = 2\,\omega\,h_{ab},
\end{equation}
and the relations 
\begin{equation}
\label{XKcommute}
X_a Y^a = 0, \quad\,\, [X, Y] = 0.
\end{equation}
It is tangential to the axis through $i$ defined by $X$, where it does not
vanish. 
It is a conformal Killing field but neither homothetic nor a
Killing field for $\tilde{h}$. 
Independent of the choice of $V$ the extended field is unique up to a
non-vanishing constant real factor.
\end{lemma}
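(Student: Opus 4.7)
The plan is to verify the main identity (\ref{DK}) by direct computation on $V$, extend $Y$ analytically to $U$, and then read off the remaining properties.

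Starting from $Y_a = f\,w_a$ with $f = (l/l_*)(1-\rho)^2$, I split $D_a Y_b = (D_a f)\,w_b + f\,D_a w_b$, substitute the formula (\ref{DDw}) for $D_a w_b$ and the Ricci representation (\ref{Ricexpr}), and use the identity $(1-u)(1-w) = w(1-w)(1-\rho)$ to collapse the Ricci contribution into $f\,\beta_*\,e^H\,w(1-w)$ times the trace-free tensor built from $w$. After using $(1-\rho)^2 w_a = -(1-\rho)\,u_a + (1-u)\,\rho_a$ to expand $D_a f$ in terms of $u_a$ and $\rho_a$, the trace-free symmetric part of $D_a Y_b$ receives three contributions: the Ricci piece, the $w_{(a}\rho_{b)}$-term from (\ref{DDw}), and the off-diagonal part of $(D_{(a}f)\,w_{b)}$. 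The ODE $L'(w) = -\beta_*\,e^H(w - w^2)$ built into the choice (\ref{lchoice}) of $l$ is exactly what forces all three to cancel simultaneously, leaving the pure trace $\omega\,h_{ab}$ with $\omega$ as stated. The antisymmetric part comes only from $2\,D_{[a}f\cdot w_{b]}$, and the same expansion of $w_a$ in $u_a,\rho_a$, together with $X^c = l\,\epsilon^{cde}u_d\rho_e$ from (\ref{Killingfield}), produces precisely $(1/l_*)\epsilon_{ab}{}^c X_c$. I expect this three-way cancellation of trace-free pieces to be the main technical obstacle.

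Once (\ref{DK}) holds on $V$, symmetrising gives (\ref{KconKil}); hypersurface orthogonality is automatic because $Y = f\,dw$ as a $1$-form gives $Y\wedge dY = f\,df\wedge dw\wedge dw = 0$; the relation $X^a Y_a = 0$ follows from $X^a w_a = 0$, itself a consequence of (\ref{XrhoXuXw}); and $[X,Y]=0$ follows because $f$ depends only on the $X$-invariant scalars $\rho, u, w$ while $X$ is Killing, so $\mathcal{L}_X w^a = D^a(\mathcal{L}_X w) = 0$. I extend $Y$ from $V$ to all of $U$ by the same device used for $X$ in Lemma \ref{existKilling}: the integrability condition for a conformal Killing field on a three-manifold with $R[h]=0$ closes as a linear system of ODEs along $h$-geodesics for the jet $(Y_a, D_a Y_b, D_a\omega)$, giving a unique analytic extension, and all of (\ref{DK}), (\ref{KconKil}), (\ref{XKcommute}) persist by analyticity. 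Tangency of the extension to the axis of $X$ is then forced by $[X,Y]=0$, since the flow of $Y$ must preserve the fixed-point set of $X$. At $i$, using (\ref{sigmaval}) with $u(i) = \nu$ and $u_a(i) = c_a$, a direct computation yields $w_a(i) = -c_a$, so $Y^a(i)$ is proportional to $u^a(i) \neq 0$ and matches the axis tangent identified in Lemma \ref{existKilling}.

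Finally, to see that $Y$ is neither homothetic nor Killing for $\tilde{h}$, suppose $\mathcal{L}_Y\tilde{h} = c\,\tilde{h}$ for some constant $c$ (the Killing case being $c=0$). Since $\tilde{h} = \Omega^{-2}h$ in the gauge (\ref{preferred-gauge}) and $\mathcal{L}_Y h = 2\omega h$, this forces $Y(\log\Omega) = \omega - c/2$. But $\Omega$ has a quadratic zero at $i$ with $\mathrm{Hess}_h\,\Omega(i) = -2h$, whereas $Y(i)\neq 0$, so $Y(\log\Omega)$ diverges like $|x|^{-1}$ in $h$-normal coordinates at $i$, while $\omega$ is analytic there --- contradiction. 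Uniqueness up to a non-zero constant factor follows because on $V$ one has $Y \parallel w$, with the proportionality pinned by (\ref{DK}) up to the single free constant $l_*$ in $l = l_*\,e^{L(w)}$; analytic continuation propagates this one-parameter freedom to $U$.
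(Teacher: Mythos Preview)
Your argument is correct and follows the paper's overall strategy: verify (\ref{DK}) on $V$ by direct computation using (\ref{DDw}), (\ref{Ricexpr}) and the defining ODE (\ref{lchoice}) for $L$, then extend via the conformal Killing prolongation system along geodesics. Two points of comparison are worth noting. First, your proof of $[X,Y]=0$ via $\mathcal{L}_X f = 0$ and $\mathcal{L}_X w^a = D^a(\mathcal{L}_X w) = 0$ (using that $X$ is Killing and that $f,w$ are built from the $X$-invariant scalars $\rho,u$) is cleaner than the paper's route, which expands $X^c D_c Y_a - Y^c D_c X_a$ explicitly using (\ref{Xetadef}) and checks cancellation term by term. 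Second, your argument that $Y(i)\neq 0$ via $w_a(i)=-c_a$ tacitly assumes $u_a(i)\neq 0$, i.e.\ $i\in V$; the lemma as stated does not impose this. The paper instead notes that the identity $X^a = -\frac{l_*}{1-\rho}\,\epsilon^{abc}Y_b\rho_c$, valid on $V$, extends analytically to $U$, and differentiating it at $i$ (where $\rho_c(i)=0$, $D_d\rho_c(i)=-2\mu\,h_{dc}$) gives $D_dX^a(i)=2\mu\,l_*\,\epsilon^{ab}{}_d\,Y_b(i)$, whence $Y(i)\neq 0$ from $D_aX_b(i)\neq 0$ without assuming $c_a\neq 0$. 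In the case $u_a(i)=0$ your formula $Y^a=f\,w^a$ is not available at $i$ (since $i\notin V$), so this is a genuine, if minor, gap that the paper's argument avoids.
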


\begin{proof}
With (\ref{DDw}) we get on $V$
\[
D_a Y_b = e^L\,((1 - \rho)^2\,L'\,w_a - 2\,(1 - \rho)\,\rho_a)\,w_b
\]
\[
+ f\left(J\,h_{ab}
+ (1 - u)\,(1 - w)\,\beta(w_aw_b - \frac{1}{3}\,h_{ab}\,w_c\,w^c)
+ \frac{1}{1 - \rho}(w_a \rho_b + w_b \rho_a)\right)
\]
\[
= f \left(
J - \frac{1}{3}\,(w - w^2)\,w_c w^c\,\beta_*\,e^H \right)h_{ab}
+ (1 - \rho)\,e^L\,(w_a \rho_b - w_b \rho_a),
\]
\[
=  \omega\,h_{ab} - e^L\,(u_a \rho_b - u_b \rho_a)
=  \omega\,h_{ab} + \frac{1}{l_*}\,\epsilon_{ab}\,^c\,X_c,
\]
which implies (\ref{DK}), (\ref{KconKil}) on $V$.
Hypersurface orthogonality and orthogonality to $X$
follow immediately from the definitions of $X$ and $Y$.  
The second of relations (\ref{XKcommute})
follows by a direct calculation from
\[
[X, Y]_a = X^cD_c Y_a - Y^cD_cX_a = \omega\,X_a - 
f\,w^c\,\epsilon_{cad}\,\eta^d 
\]
\[
= \omega\,X_a - f\,\frac{1}{3}\left[\left(\Delta_h\rho 
+ \rho\,w_d\,w^d\,\beta_*\,e^H\right)\,
\frac{1 - u}{(1 - \rho)^2}
- \left(\Delta_hu
+ u\,w\,w_d\,w^d\,\beta_*\,e^H\right)
\,\frac{1}{1 - \rho}\right] X_a
\]
\[
= \omega\,X_a - f\,\left[J 
- \frac{1}{3}\, \frac{u - \rho}{1 - \rho}w\,\,w_d\,w^d\,\beta_*\,e^H
\right] X_a = 0.
\]

To extend $Y$ to $U$ we consider the integrability
conditions for the conformal Killing equations (\cite{yano}),
\begin{equation}
\label{2confKil}
D_aD_bY_c = Y_d\,R^d\,_{abc} + \omega_a\,h_{bc} + \omega_b\,h_{ac}
- \omega_c\,h_{ab},
\end{equation}
with
\begin{equation}
\label{Domegadef}
D_a \omega = \omega_a,
\end{equation}
and
\begin{equation}
\label{3confKil}
D_a\,\omega_b = - {\cal L}_Y\,s_{ab}
= - (Y^c\,D_c\,s_{ab} + D_aY^c\,s_{cb} + D_bY^c\,s_{ac}),
\end{equation}
satisfied in our conformal gauge by $Y$ on $V$ and to be satisfied on $U$. 
Fix $p \in V$
and consider the geodesics through
$p$. The equations above imply a linear system of ODE's for $Y_a$,
$D_aY_b$, $\,\omega$, $\,\omega_a$ along the geodesics which determine a unique
analytic extension of $Y$ to $U$. Equations (\ref{KconKil}) and 
(\ref {XKcommute}) extend to $U$ by analyticity.

Because
\[
X^a =  - \frac{l_*}{1 - \rho}\,\epsilon^{abc}\,Y_b\,\rho_c,
\] 
the relation $D_aX_b(i) \neq 0$ implies that $Y^a \neq 0$ at $i$ 
and thus also on the axis near $i$. The restriction of 
\[
0 = [X, Y]_a = X^cD_c Y_a - Y^cD_cX_a,
\]
to the axis implies that $Y$ is tangent to the axis, 
the conformal Killing equation implies that $Y \neq 0$ there.

It holds ${\cal L}_Y\,\tilde{h} = {\cal L}_Y\,(\Omega^{-2}h) = 
2\,\tilde{\omega}\,\tilde{h}$ with
$\tilde{\omega} = \omega - \Omega^{-1}\,Y^aD_a\,\Omega$.
Because $D_a(Y^c\,\rho_c)|_i = - 2\,\mu\,Y_a(i) \neq 0$,
it follows with 
(\ref{Bvomsig})  that $\Omega^{-1}\,Y^aD_a\,\Omega$ diverges at $i$. Thus
$\tilde{\omega}$ can neither be constant nor vanish.
\end{proof}

\vspace{.3cm}

On $V$, which can be chosen to contain $i$ if $u_a(i) \neq 0$, the dualized
version of the Cotton tensor
$B_{bca} = D_{[c}\,R_{a]b}$ acquires by (\ref{Bcott}), (\ref{Ricexpr}),
(\ref{Killingfield}), and  (\ref{conKillingexists}) the concise form
\begin{equation}
\label{Cott-repr}
B_{ab} = \frac{1}{2}\,B_{acd}\,\epsilon_b\,^{cd} 
= \frac{1}{1 - \rho}\,s_{d(a}\,\epsilon_{b)}\,^{cd}\rho_c 
= \frac{\beta_*\,e^{H - 2 L}}{l_*\,(1 - \rho)^5}\,
X_{(a}\,Y_{b)}.
\end{equation}

\vspace{.1cm}

The conformal factor (\ref{Bvomsig}) and the 
transformation laws under conformal rescalings give  
\[
R_{ab}[\tilde{h}] =
\sqrt{\rho}\,\beta\,(w_{a}\,w_b - \frac{1}{3}\,h_{ab}\,w_{c}\,w^c)
- \frac{3}{2\,\sqrt{\rho}^3\,(1 + \sqrt{\rho})^2}
(\rho_a \,\rho_b - \frac{1}{3}\,\rho_c \,\rho^c\,h_{ab}).
\]
At $i$ holds $w_a\,\rho^a = 0$ and
$D_c(w_a\,\rho^a) = 2\,\mu\,u_c(i) \neq 0$ if $u_c(i) \neq 0$.
Thus, if $u_c(i) \neq 0$, there exists a smooth hypersurface $H$
through $i$ on which $w_a\,\rho^a = 0$. 
With $X_a\,\rho^a = 0$, $X_a\,w^a = 0$ it follows that
the vector fields $w^a$, $\rho^a$, $X^a$ define on $H \setminus \{i\}$
an orthogonal set of eigenvectors of $R_{ab}[\tilde{h}]$. The direct
calculation shows that the three eigenvalues are different from each
other. It follows that $\tilde{g}$ is of Petrov type $I$ near $H
\setminus \{i\}$. 
The fields $\rho^a$ and 
$w^a$ are proportional to $u^a$ on the axis because $X = 0$ there.
It follows that 
$R_{ab}[\tilde{h}] \sim \rho_a \,\rho_b - \frac{1}{3}\,\rho_c
\,\rho^c\,h_{ab}$ so that $\tilde{g}$ is of Petrov
type $D$ along the axis (cf. \cite{stephani:et:al}).

\section{Existence and non-existence results}

In spite of the simple conclusion (\ref{nu1ca0}) the case 
were $u_a(i) = 0$ is not easily 
discussed in general. The function 
$\beta = \frac{3}{2}\,\frac{s_{ab}\,w^a w^a}{(w_c w^c)^2}$
in (\ref{2ndbigindcond}) may become, along with the quantities $H$ and $L$,
singular at $i$ if $w_a(i) = 0$. In fact, if $s_{ab}$ does not vanish
at
$i$, it is there due to axi-symmetry of the form
\begin{equation}
\label{sab-at-i}
s_{ab}(i) = \xi\,(n_a n_b + \frac{1}{3}\,h_{ab}), \quad \quad
\xi \neq 0,
\end{equation}
with a unit vector $n^a$ pointing in the
direction of the axis. With this one finds that
$\beta = - sign(\xi)\,\sqrt{3/2\,s_{ed} s^{ed}}\,(w_cw^c)^{-1}$
is unbounded near $i$. If $D_d$ is applied to equations 
(\ref{Bcott}) and (\ref{biguid}), the resulting equations are subtracted from
each other, and the difference is restricted to $i$, one obtains with 
$u_a(i) = 0$ and (\ref{n1fequ}), (\ref{sigmaval}), (\ref{Lapvan}), (\ref{DDvan}) 
at $i$ the relation 
\[
\nu\left\{2\,s_{d[c}s_{a]b} + s_d\,^f s_{f[c}h_{a]b}\right\}
= 
2\,\mu\left\{2\,h_{d[c}s_{a]b} + s_{d[c}h_{a]b}\right\}.
\]
It restricts the parameter in (\ref{sab-at-i}) by 
$\xi = \frac{6\,\mu}{\nu}$ but it does not exclude (\ref{sab-at-i}).
We shall leave the case $D_au(i) = 0$, $s_{ab}(i) \neq 0$ open in this
article. If the second of these conditions is dropped we get with
(\ref{nu1ca0}) a complete, though negative answer.
\begin{lemma} 
\label{du(i)=0}
Suppose $h$, $\rho$ is a solution to the conformal static vacuum
equations (\ref{n1fequ}), (\ref{sigmaval}) such that $s_{ab}(i) = 0$. Then a
positive solution
$u$ to equations  (\ref{Lapvan}), (\ref{DDvan}), (\ref{uati}) near $i$ which
satisfies $u(i) = \nu \neq 1$ and $D_au(i) = 0$ is the constant solution 
$u = \nu$ and $h$ is flat.
\end{lemma}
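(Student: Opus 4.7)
The plan is to assume for contradiction that $u$ is not identically $\nu$ and then derive a contradiction by splitting on whether $h$ is conformally flat. First I would evaluate the two field equations at $i$: (\ref{Lapvan}) with $D_au(i)=0$ gives $2\nu\,t(i)=0$, hence $t(i)=0$; substituting $t(i)=0$ and $s_{ab}(i)=0$ into (\ref{DDvan}) at $i$ then yields $D_aD_bu(i)=0$. The vanishing of the low-order jet $u(i)=\nu$, $D_au(i)=0$, $D_aD_bu(i)=0$ is the key input for the case analysis below.

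In the first case, assume that $h$ is not conformally flat. Then Lemmas \ref{ricciexpr} and \ref{existKilling} produce a non-trivial analytic Killing field $X$ on $U$ with local representation $X^a=l\,\epsilon^{abc}u_b\rho_c$ on a dense open set $V\subset U$, with $X(i)=0$. Since $X\not\equiv 0$, the integrability (\ref{Killingintcondequ}) of the Killing equations forces $D_aX_b(i)\neq 0$, for otherwise the unique Killing field with $X(i)=0=D_aX_b(i)$ would be identically zero on the connected neighbourhood $U$. I would derive the opposite conclusion directly. The hypothesis $s_{ab}(i)=0$, combined with the representation (\ref{Ricexpr}) and with $w_a(i)=0$ (which follows from $D_au(i)=D_a\rho(i)=0$), shows that $s_{ab}(x)=O(|x|^2)$ near $i$; this decay is exactly what controls the $0/0$ expression for $\beta$ at $i$ coming from (\ref{2ndbigindcond}) and makes the scalar $l=l_\ast e^{L(w)}$ from (\ref{lchoice}) extend smoothly to $i$. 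With $l$ smooth at $i$, Leibniz gives
\[
D_dX^a(i) \;=\; D_d\bigl(l\,\epsilon^{abc}u_b\rho_c\bigr)\bigm|_i \;=\; 0,
\]
because every summand in the expansion carries a factor $u_b(i)=0$ or $\rho_c(i)=0$. This contradicts $D_aX_b(i)\neq 0$, and rules out the first case.

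In the second case, $h$ is conformally flat; by the Schwarzschild rigidity recalled in the introduction, $h$ must then equal the flat metric in the preferred gauge (\ref{preferred-gauge}), and in particular $s_{ab}\equiv 0$. Equation (\ref{DDvan}) then reduces to the condition that the Hessian of $u$ be pure-trace on flat $\mathbb{R}^3$, whose only solutions with $u(i)=\nu$ and $D_au(i)=0$ are of the form $u=\nu+c\,|x|^2$. Substituting into (\ref{Lapvan}) gives $4c\nu+4c^2|x|^2=4c^2|x|^2$, whence $c=0$ and $u\equiv\nu$, contradicting the working assumption that $u$ is not the constant $\nu$. Combining the two cases forces $u\equiv\nu$ on a neighbourhood of $i$; substituting back into (\ref{DDvan}) with $t\equiv 0$ yields $\nu(1-\nu)s_{ab}\equiv 0$, and since $\nu\neq 1$ we obtain $s_{ab}\equiv 0$. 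In dimension three with $R[h]=0$ this gives $R_{abcd}\equiv 0$, i.e., $h$ is flat.

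The step I expect to be the main obstacle is the regularity claim for $l$ at $i$ in the first case. The hypothesis $s_{ab}(i)=0$ is used essentially here, via the form (\ref{Ricexpr}), to keep $\beta$ (and hence $H$ and $L$) bounded at $i$; without it $\beta$ blows up at $i$, as noted in the discussion of the axisymmetric form (\ref{sab-at-i}) with $\xi\neq 0$ preceding the lemma, and this blow-up compensates the vanishing of $u_b\rho_c$ and breaks the Leibniz computation. This is precisely why the complementary case $D_au(i)=0$, $s_{ab}(i)\neq 0$ is left open in this section.
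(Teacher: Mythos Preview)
Your Case 2 is fine, but Case 1 has a genuine gap at exactly the point you flag. The representation $X^a=l\,\epsilon^{abc}u_b\rho_c$ from Lemma \ref{existKilling} is established only on a set $V\subset U^*$ where $w_a\neq 0$ and $\epsilon^{abc}u_b\rho_c\neq 0$; since $u_a(i)=\rho_a(i)=0$ and hence $w_a(i)=0$, the point $i$ lies outside every such $V$, and the extension of $X$ to $i$ is obtained via the Killing ODE (\ref{Killingintcondequ}), not via this formula. Your Leibniz computation of $D_dX^a(i)$ is therefore only valid if $l$ extends smoothly across $i$, and your justification for that is circular: you invoke (\ref{Ricexpr}) to deduce $s_{ab}=O(|x|^2)$, but (\ref{Ricexpr}) only gives $s_{ab}=\frac{\beta_*}{1-\rho}e^H\cdot O(|x|^2)$, and concluding $s_{ab}=O(|x|^2)$ from this requires $e^H$ (equivalently $\beta$) to be bounded near $i$, which is what you are trying to establish. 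With only $s_{ab}(i)=0$ one has a priori $s_{ab}=O(|x|)$, and then $\beta=\tfrac{3}{2}\,s_{ab}w^aw^b/(w_cw^c)^2=O(|x|^3)/O(|x|^4)=O(|x|^{-1})$, so $l$ may blow up like $|x|^{-2}$ while $\epsilon^{abc}u_b\rho_c=O(|x|^3)$, leaving $D_aX_b(i)\neq 0$ perfectly possible.

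The paper avoids this regularity issue entirely. Instead of the Killing field, it uses the analytic identity
\[
f_{ab}\equiv 2\,(w_cw^c)^2\,s_{ab}-3\,w^cw^d s_{cd}\,(w_aw_b-\tfrac{1}{3}h_{ab}\,w_ew^e)=0,
\]
which encodes (\ref{2ndbigindcond}) without the singular factor $\beta$, and then runs an induction on the order of vanishing of $s_{ab}$ at $i$: assuming $s_{ab}=O(|x|^k)$, the field equations force $u=\nu+O(|x|^{k+2})$, and expanding $f_{ab}$ along complex null geodesics through $i$ shows that the totally symmetric null datum $D_{a_1}\cdots D_{a_k}s_{ab}(i)$ must vanish, whence $s_{ab}=O(|x|^{k+1})$. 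This bootstrap yields $s_{ab}\equiv 0$ directly. In effect, the step you hoped to get for free ($s_{ab}=O(|x|^2)$) is itself the base case of this induction and already requires the null-geodesic/spinor argument; once you have that machinery, the Killing-field detour is unnecessary.
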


\begin{proof}  We assume that $u$ is not a constant and show that this leads
to a contradiction. In the following results of
\cite{friedrich:statconv} will be used. For the notions set below
in quotation marks and statements relating to them we refer the reader to
that article.

Let $x^a$ denote $h$-normal coordinates centered at $i$, so that 
$h^*_{ab} \equiv h_{ab}(i) = - \delta_{ab}$.
Equation (\ref{Bcott}) implies with (\ref{sigmaval})
that $D_as_{bc}(i)$ is totally symmetric. More
generally, if
\[
s_{ab} = O(|x|^k) \quad \mbox{for some}  \quad k \ge 1, 
\]
the `exact set of equations argument' implies with (\ref{Bcott}) that the
tensor
\[
\psi_{a_1\,\ldots \,a_k\,a\,b} \equiv D_{a_1}\,\ldots\,D_{a_k}\,s_{ab}(i),
\]
is given in space
spinor notation by a completely symmetric spinor
$\psi_{A_1B_1\,\ldots\,A_{k+2} B_{k+2}}$ because it either vanishes or
defines the non-vanishing `null datum' of lowest order for the solution $h$.
We shall show that it vanishes so that in fact 
$s_{ab} = O(|x|^{k+1})$.

Equations (\ref{Lapvan}),
(\ref{DDvan}), (\ref{contrid}) imply with the assumption above  
\[
u = \nu + O(|x|^{k + 2}), \quad \quad
D_{a_1}\,\ldots\,D_{a_{k+2}}\,u(i) =
- \nu\,(1 - \nu)\,\psi_{a_1\,\ldots \,a_{k+2}},
\]
so that one obtains at $i$ `normal expansions'
\[
s_{ab} = \frac{1}{k!}\,\psi_{a_1\,\ldots
\,a_k\,a\,b}\,x^{a_1}\,\ldots\,x^{a_k} + O(|x|^{k + 1}),
\]
\[
u = \nu - \nu\,(1 - \nu)\,\frac{1}{(k+2)!}\,\psi_{a_1\,\ldots
\,a_{k+2}}\,x^{a_1}\,\ldots\,x^{a_{k+2}} + O(|x|^{k + 3}),
\]
\[
D_au = - \nu\,(1 - \nu)\,\frac{1}{(k+1)!}\,\psi_{a_1\,\ldots
\,a_{k+1}\,a}\,x^{a_1}\,\ldots\,x^{a_{k+1}} + O(|x|^{k + 2}),
\]
\[
\rho = - \mu\,h^*_{ab}\,x^a\,x^b + O(|x|^{k + 4}), \quad
D_a\rho = - 2\,\mu\,h^*_{ab}\,x^b  + O(|x|^{k + 3}),
\]
whence
\[
w_a = \frac{- 2\,\mu\,(1 - \nu)\,h^*_{ab}\,x^b}{(1 +
\mu\,h^*_{ab}\,x^a\,x^b)^2} + O(|x|^{k + 1}), 
\quad
w_a w^a = \frac{(2\,\mu\,(1 - \nu))^2\,h^*_{ab}\,x^a\,x^b}
{(1 + \mu\,h^*_{ab}\,x^a\,x^b)^4} + O(|x|^{k + 2}).
\]
It follows that $w_a \neq 0$ on a punctured
neighbourhood of $i$. Consider near $i$ the real analytic field
\[
f_{ab} = 2\,(w_c w^c)^2 s_{ab} - 3\,w^c w^d\,s_{cd}\,
(w_a\,w_b - \frac{1}{3}\,h_{ab}\,w_e w^e).
\] 
 Lemma \ref{ricciexpr} implies with the assumption that
$u$ is not constant that the field $f_{ab}$
vanishes on the open set where
$w_a \neq 0$ and thus, by analyticity, everywhere. 
In the normal coordinates $x^a$ we extend near $i$ now all real analytic
fields holomorphically into the complex domain and consider a complex null
geodesic $x^a(\tau) = \tau\,l^a$ with $l^a = const. \neq 0$, $l_a l^a = 0$. 

Expanding $f_{ab}(x(\tau))$ at $\tau = 0$  and observing the expansions 
above gives
\[
0 = f_{ab}
= - 3\,\frac{(2\,\mu\,(1 - \nu))^4}{k!}\,
\psi_{a_1\,\ldots\,a_{k+2}}\,l^{a_1}\,\ldots\,l^{a_{k+2}}\,l_a\,l_b\,
\tau^{k+4} +O(|\tau|^{k + 5}).
\]
This shows that
\[
0 = \psi_{a_1\,\ldots\,a_{k+2}}\,l^{a_1}\,\ldots\,l^{a_{k+2}} =
\psi_{A_1B_1\,\ldots\,A_{k+2} B_{k+2}}\,\iota^{A_1}\,\ldots \iota^{B_{k+2}},
\]
where $l^a$ is represented on the right hand side by the spinor
$\iota^A\,\iota^B$. Because $l^a$ and thus the spinor
$\iota^A$ is arbitrary here,  the symmetric spinor
$\psi_{A_1B_1\,\ldots\,A_{k+2} B_{k+2}}$ must vanish, which implies $s_{ab}
= O(|x|^{k+1})$ and $u =
\nu + O(|x|^{k+3})$. In contradiction to our assumption it follows 
inductively that $u - \nu$ and $s_{ab}$ vanish at $i$ at all orders.
\end{proof}

\vspace{.3cm}

The existence result anounced in the introduction will now be proven. In
stating it we ignore trivial  rescalings.

\begin{lemma} 
\label{du(i)neq0}
For given data $\mu, \nu, \beta_* \in \mathbb{R}$ and 
$c^a \in\mathbb{R}^3$ satisfying 
\[
\mu > 0, \quad \nu > 0, 
\quad \beta_* \neq 0, \quad c^a \neq 0, 
\]
there exists a solution $h$ to the conformal static field
equations which admits a non-trivial conformal rescaling with conformal
factor $\vartheta = \sqrt{\nu/u}$. The fields $h$ and $u$ are uniqely
determined by the requirements that 
$m = 2\,\sqrt{\mu}$ is the
ADM mass of the asymptotically flat static metric associated with $h$, the
Ricci tensor of $h$ assumes in $i$-centered $h$-normal coordinates
the value
\[
s_{ab} = \beta_*(c_a\,c_b - \frac{1}{3}\,h_{ab}\,c_d\,c^d) \neq 0
\quad \mbox{at} \quad i,
\]   
and the positive function $u$ satisfies
\[
u = \nu, \quad D_au = c_a
\quad \mbox{at} \quad i.
\]   
\end{lemma}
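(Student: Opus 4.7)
The plan is to construct the solution by imposing the symmetry ansatz forced by lemmas \ref{existKilling} and \ref{existsconf Killing}: any solution admitting a non-trivial rescaling carries an axisymmetric Killing field $X$ and a commuting, orthogonal conformal Killing field $Y$ tangent to the axis through $i$. Since $\dim S = 3$, the orbits of $\langle X, Y\rangle$ are two-dimensional, so all geometric data depend on a single transversal variable, and the overdetermined PDE system reduces to an ODE system to which standard analytic existence results apply.

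First, I would introduce coordinates $(\xi, \tau, \phi)$ adapted to the symmetries: $X = \partial_\phi$, $Y = \partial_\tau$, and $\xi$ labels the orbits of $\langle X, Y\rangle$, with $\xi = 0$ on the axis. Hypersurface orthogonality of $X$ and $Y$ together with their commutativity (\ref{XKcommute}) force $h = -\bigl(A(\xi)\,d\xi^2 + B(\xi,\tau)\,d\tau^2 + C(\xi,\tau)\,d\phi^2\bigr)$, and the conformal Killing equation (\ref{KconKil}) further constrains $B$ and $C$ to share a common $\tau$-dependence, leaving essentially two unknown functions of $\xi$. The scalars $\rho, u$ and the auxiliary quantities $\beta, H, L$ of (\ref{Ricexpr}), (\ref{betaexpr}), (\ref{lchoice}) likewise depend on $\xi$ alone.

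Second, I would rewrite the conformal static vacuum equations (\ref{n1fequ}), the rescaling equations (\ref{DDvan}), and the subsidiary equation (\ref{an2fequ}) in this ansatz. Each tensorial equation collapses to an ODE in $\xi$, and the integrability identities (\ref{Piint}), (\ref{contrid}), (\ref{biguid}) together with the second Bianchi identity become propagation equations that vanish everywhere once they vanish at $i$. The prescribed data $(\mu,\nu,\beta_*,c^a)$ fix, via (\ref{sigmaval}), (\ref{uati}) and the form of $s_{ab}(i)$, the leading coefficients of the power series solution at $i$; standard analytic ODE existence theory then yields a unique local real analytic solution, and axisymmetric extension produces $(h, \rho, u)$ on a full $i$-centered neighbourhood. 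Asymptotic flatness of the associated $\tilde{h}$ follows from real analyticity of $h$ at $i$ via the arguments of \cite{beig:simon}, \cite{kennefick:o'murchadha}.

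The main obstacle is twofold. First, one must verify that the symmetry-reduced ODE system is genuinely equivalent to the full overdetermined system, so that no component of $\Sigma_{ab}$ or $\Pi_{ab}$ is silently lost in the reduction. This follows by systematic use of (\ref{Piint}), (\ref{contrid}), (\ref{biguid}) in conjunction with the specific form (\ref{Ricexpr}) of $s_{ab}$, reducing the verification to a finite number of algebraic conditions on the initial data at $i$. Second, the axis $\xi = 0$ is a coordinate singularity at which $X$ vanishes and expressions such as $\beta$ and $L$ contain $w_c w^c$ in the denominator; the hypothesis $c^a \neq 0$, equivalent to $u_a(i) \neq 0$, is precisely what guarantees $w_a(i) \neq 0$ so that all relevant quantities remain analytic at $i$, in agreement with the conclusion of lemma \ref{existKilling} that the axis is the $h$-geodesic with tangent $u^a(i)$.
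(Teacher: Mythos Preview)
Your reduction step does not go through. The field $Y$ of lemma~\ref{existsconf Killing} is only a \emph{conformal} Killing field for $h$, not a Killing field: $\mathcal{L}_Y h = 2\omega h$ with $\omega\not\equiv 0$, and more decisively $\mathcal{L}_Y\rho\neq 0$---the proof of lemma~\ref{existsconf Killing} shows $D_a(Y^c\rho_c)|_i = -2\mu\,Y_a(i)\neq 0$. Hence $\rho$, $u$, and the metric coefficients are \emph{not} constant along $Y$-orbits and do not depend on the transversal variable $\xi$ alone. Axisymmetry reduces the problem from three variables to two, but a conformal Killing field does not cut that down to one; you still face an overdetermined PDE system in two variables, and the appeal to ``standard analytic ODE existence theory'' is unfounded. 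There is also a circularity issue: lemmas~\ref{existKilling} and~\ref{existsconf Killing} derive $X$ and $Y$ from an already existing solution $(h,u)$, so you cannot use them to set up coordinates before the solution is constructed; at best they motivate an ansatz, but then one must verify directly that the ansatz satisfies the full tensorial equations (\ref{n1fequ}), (\ref{DDvan}), not merely a symmetry-reduced fragment.

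The paper proceeds quite differently and does not invoke the symmetries for existence at all. It packages (\ref{n1fequ}), (\ref{DDvan}), (\ref{B1intco}), (\ref{contrid}), (\ref{betaequ}) together with the first and second structure equations for $h$ into an exterior differential system $\mathbf{\Psi}=0$ on the frame bundle, with the key device that $s_{ab}$ is \emph{defined} by the ansatz $s_{ab}=\beta\,(w_aw_b-\tfrac{1}{3}h_{ab}\,w_cw^c)$ and $\beta$ is promoted to an independent unknown with its own transport equation. A lengthy but straightforward computation shows that $d\mathbf{\Psi}^A$ is a linear combination of the $\mathbf{\Psi}^B$. One then solves the interior-product equations $i_{\hat Y}\mathbf{\Psi}=0$ as ODEs along \emph{all} radial $h$-normal geodesics through $i$ (this is where the data $\mu,\nu,c_a,\beta_*$ enter, and where $c_a\neq 0$ is needed to keep $\beta$ regular at $i$), and the closure identity propagates $\mathbf{\Psi}=0$ from the single point $i$ to a full neighbourhood via $\mathcal{L}_{\hat Y}\mathbf{\Psi}\equiv 0\pmod{\mathbf{\Psi}}$. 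No symmetry reduction is used; the argument is a Frobenius-type involutivity check for the combined system.
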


\begin{proof}
The result will be obtained by solving  simultaneously the conformal field
equations for $h$ and the equations satisfied by $u$.
Let $e_a$, $a = 1, 2, 3$, denote an $h$-orthonomal frame 
and denote the 1-forms dual to it by $\sigma^a$. The metric is then given by
$h = h_{ab}\,\sigma^a\,\sigma^b$
with metric coefficients $h_{ab} = h(e_a, e_b) = -
\delta_{ab}$. The connection coefficients, defined by $D_{e_a}e_b =
\Gamma_a\,^c\,_b\,e_c$, satisfy $\Gamma_{acb} = \Gamma_{a[cb]}$ with
$\Gamma_{acb} = h_{cd}\,\Gamma_a\,^d\,_b$.
The connection form is then given by
$\omega^a\,_b = \Gamma_c\,^a\,_b\,\sigma^c$ so that 
$\omega_{ab} = \omega_{[ab]}$. 
This expansion in terms of the $\sigma^a$ will be
used only later, when we describe the solution
procedure in detail. 

The equations will be written as differential system for the unknown
\[
{\bf U} = (\rho,\,\, \rho_a, \,\, s, \,\, u, \,\, u_a,\,\, t, 
\,\, \beta, \,\, \sigma^a, \,\, \omega^a\,_b),
\]
where the first seven components  denote (vector-valued) 0-forms and 
the last two components are 1-forms. Until we introduce coordinates below
all indices should be understood as frame indices. Consider the
differential forms
\[
\Lambda = d\rho - \rho_a \sigma^a, \quad \quad \quad \quad
\Gamma = d u - u_a \sigma^a,
\]
\[
\Sigma_a = d \rho_a - \rho_c \,\omega^c\,_a - s\,h_{ab}\,\sigma^b
+ \rho\,(1 - \rho)\,s_{ab}\,\sigma^b,
\]
\[
\Pi_a = d u_a - u_c \,\omega^c\,_a - t\,h_{ab}\,\sigma^b
+ u\,(1 - u)\,s_{ab}\,\sigma^b,
\]
\[
S = ds + (1 - \rho)\,\rho^a\,s_{ab}\,\sigma^b, \quad \quad 
T = dt + (1 - u)\,u^a\,s_{ab}\,\sigma^b,
\]
\[
B = d \beta + \beta\,M_a\,\sigma^a,
\]
\[
\Theta^a = d \sigma^a\, + \omega^a\,_b \wedge \sigma^b, \quad \quad 
\Delta^a\,_b = d \omega^a\,_b + \omega^a\,_c \wedge \omega^c\,_b
- \Omega^a\,_b,
\]
denoted collectively by
\[
{\bf \Psi} = (\Lambda,\,\, \Gamma, \,\, \Sigma_a, \,\,\Pi_a, \,\,S, 
\,\,T, \,\, B, \,\,\Theta^a, \,\, \Delta^a\,_b),
\]
or by ${\bf \Psi}^A$ if we need to bring out relations involving
different components. In the differential forms  above and in the forms
derived below we consider functions of the components of ${\bf U}$ which
are given by
\[
w = \frac{1 - u}{1 - \rho}, \quad 
w_a = \frac{1 - u}{(1 - \rho)^2}\,\rho_a - \frac{1}{1 - \rho}\,u_a,
\quad
s_{ab} = \beta\,(w_a\,w_b - \frac{1}{3}\,h_{ab}\,w_cw^c),
\]
\[
M_a = Q\,w_a
+ \frac{1}{3}\,(1 - u)\,(1 - w)\,\beta\,w_a - \frac{1}{1 - \rho}\,\rho_a,
\quad \quad
Q = \frac{5\,(w\,s - t + \rho_c\,w^c)}{w_e w^e\,(1 - \rho)},
\]
\[
\Omega^a\,_b = \frac{1}{2}\,R^a\,_{bcd}\,\sigma^c \wedge \sigma^d
= (h^a\,_c\,s_{bd} - h_{bc}\,s_d\,^a)\,\sigma^c \wedge \sigma^d.
\]

The equations we need to solve read now 
\[
{\bf \Psi} = 0.
\] 
The first two of these equations ensure that 
$\rho_a$ and $u_a$ represent the differentials of $\rho$ and $u$,
the following five equations represent
(\ref{n1fequ}), (\ref{DDvan}), (\ref{B1intco}), (\ref{contrid}), (\ref{betaequ})
and the remaining equations are the first and the second
structural equation.

A lengthy but straight forward calculation shows that 
the differential forms comprised by ${\bf \Psi}$ satisfy
the differential system
\[
d \Lambda = - \Sigma_a \wedge \sigma^a - \rho_a\,\Theta^a,
\]
\[
d \Gamma = - \Pi_a \wedge \sigma^a - u_a\,\Theta^a,
\]

\[
d \,\Sigma_a = - \Sigma_b \wedge \omega^b\,_a - \rho_b\,\Delta^b\,_a
- S \wedge \sigma_a - s\,\Theta_a 
+ (1 - 2\,\rho)\,\Lambda \wedge s_{ab}\,\sigma^b
+ \rho\,(1 - \rho)\,s_{ab}\,\Theta^b
+ \rho\,(1 - \rho)\,{\cal A}_a,
\]
\[
d \Pi_a = - \Pi_b \wedge \omega^b\,_a - u_b\,\Delta^b\,_a 
- T \wedge \sigma_a
- t\,\Theta_a + (1 - 2\,u)\,\Gamma \wedge s_{ab}\,\sigma^b
+ u\,(1 - u)\,s_{ab}\,\Theta^b
+ u\,(1 - u)\,{\cal A}_a,
\]

\[
d T = - \Lambda \wedge \rho^a\,s_{ab}\,\sigma^b 
+ (1 - \rho)\,\Sigma_a \wedge s^a\,_b\,\sigma^b
+ (1 - \rho)\,\rho^a\,s_{ab}\,\Theta^b
+ (1 - \rho)\,\rho^a\,{\cal A}_a,
\]
\[
d S = - \Gamma \wedge u^a\,s_{ab}\,\sigma^b 
+ (1 - u)\,\Pi_a \wedge s^a\,_b\,\sigma^b
+ (1 - u)\,u^a\,s_{ab}\,\Theta^b 
+ (1 - u)\,u^a\,{\cal A}_a,
\]

\[
d B = B \wedge (M_a\,\sigma^a + \beta\,F\,w_a\,\sigma^a) 
\]
\[
+\, \beta\left\{\frac{Q}{1 - \rho}\,\Lambda
+ \left(\frac{5\,\rho^c}{(1 - \rho)^3\,w_bw^b}
- \frac{2\,Q\,w^c}{(1 - \rho)^2\,w_b\,w^b}\right)
\left(- \rho_c\,\Gamma
+ w\,\rho_c\,\Lambda + 
(1 - \rho)\,(w_c\,\Lambda
+ w\,\Sigma_c - \Pi_c)\right)\right.
\]
\[
\left. 
+ \frac{5}{(1 - \rho)^2\,w_b w^b}\left(
s\,w\,\Lambda - s\,\Gamma 
+ (1 - \rho)\,(w\,S - T + w^c\,\Sigma_c)
\right)\right\} \wedge w_a \sigma^a
\]
\[
+ \,\beta\left\{ \frac{Q + \beta\,F}{(1 - \rho)^2}
\left(w\,\rho_a\,\Lambda - \rho_a\,\Gamma
+ (1 - \rho)\,(w_a\,\Lambda + w\,\Sigma_a - \Pi_a)
\right) - \frac{1}{3}\,\beta\,((1 - 2\,w)\Gamma + w^2\,\Lambda)\,w_a
\right.
\]
\[
\left. - \frac{\rho_a}{(1 - \rho)^2}\,\Lambda - \frac{1}{1 - \rho}\,\Sigma_a
\right\} \wedge \sigma^a + \beta\,M_a\,\Theta^a,
\]

\[
d \Theta^a = \Delta^a\,_b \wedge \sigma^b - \omega^a\,_b \wedge \Theta^b, 
\]
\[
d \Delta^a\,_b = \Delta^a\,_c \wedge \omega^c\,_b 
- \omega^a\,_c \wedge \Delta^c\,_b
- (h^a\,_c\,s_{bd} - h_{bc}\,s_d\,^a)\,(\Theta^c \wedge \sigma^d
- \sigma^c \wedge \Theta^d)
+ \sigma^a \wedge {\cal A}_b - {\cal A}^a \wedge \sigma_b,
\]

where
\[
{\cal A}_a = 
w_a\,B \wedge w_b\,\sigma^b - \frac{1}{3}\,w_b w^b\,B \wedge \sigma_a
+ \beta\,({\cal B}_a \wedge w_b \,\sigma^b + w_a\,{\cal B}_b \wedge \sigma^b
- \frac{2}{3}w^b\,{\cal B}_b \wedge \sigma_a),
\]
with
\[
{\cal B}_a = - \frac{\rho_a}{(1 - \rho)^2}\,\Gamma
+ \left(\frac{w_a}{1 - \rho} + \frac{w\,\rho_a}{(1 - \rho)^2}\right)\Lambda
+ \frac{w}{1 - \rho}\,\Sigma_a
- \frac{1}{1 - \rho}\,\Pi_a.
\]

In deriving the last equation there arises the expression
\[
{\cal N}^{ab} = - w_d\,w^d\,\sigma^a \wedge \rho_c\,\sigma^c \wedge \sigma^b
+ \rho_c\,w^c\,\sigma^a \wedge w_d\,\sigma^d \wedge \sigma^b
+ \sigma^a \wedge \rho_c\,\sigma^c \wedge w_d\,\sigma^d \,\,w^b
- w^a\,\,\rho_c\,\sigma^c \wedge w_d\,\sigma^d \wedge \sigma^b.
\]
By writing 
$\sigma^a \wedge \sigma^b \wedge \sigma^c = \epsilon^{abc}\,\mu_h$ with the
volume form defined by $h$ on the right hand side and calculating
${\cal N}^{ab}\,\epsilon_{abc}$, the field ${\cal N}^{ab}$  
can be shown to vanish identically.

In short notation (the summation convention applying to all indices)
the relations above take the form
\begin{equation}
\label{dPsirel}
d\,{\bf \Psi}^A = f^A_{\,B\,c}\,\sigma^c \wedge {\bf \Psi}^B
+ f^{A\,a}_{\,B\,b}\,\omega^b\,_a \wedge {\bf \Psi}^B
+ f^A_{\,B}\,{\bf \Psi}^B, 
\end{equation}
with functions
$f^A_{\,B\,c}$, $f^{A\,a}_{\,B\,b}$, $f^A_{\,B}$ of the
$0$-forms given by ${\bf U}$.

\vspace{.3cm}

Because we assume that $D_au(i) = c_a \neq 0$, all fields given by 
$\bf U$, in particular $\beta$ (with the meaning given to it earlier), 
can be assumed to be smooth near $i$. The solutions to the  equations
${\bf \Psi} = 0$ are obtained near $i$ as follows. 
Assume that $x^a$ denote $i$-centered, $h$-normal coordinates and $e_a$ an
$h$-orthonormal normal frame centered at $i$ so that 
$e^a\,_b = \,<dx^a,\,e_b>\, = \delta^a\,_b$ at $i$. 
Then $\hat{Y} = x^a\,\partial_a$ is smooth near $i$ and so that 
$Y = \frac{1}{|x|}\,\hat{Y}$ is for $x^a \neq 0$ the geodesic unit vector
field tangent to the geodesics through $i$, which has direction dependent
limits as $|x| \rightarrow 0$.
Writing 
$\sigma^a = \sigma^a\,_b\,dx^b$, we find the inner products
\begin{equation}
\label{ihngauge}
i_{\hat{Y}}\,\sigma^a = \sigma^a\,_b(x)\,x^b = x^a, \quad \quad 
i_{\hat{Y}}\,\omega^a\,_b = x^c\,\Gamma_c\,^a\,_b(x) = 0
\quad \mbox{near} \quad x^a = 0.
\end{equation}

Consider now a radial geodesic $\gamma: \tau \rightarrow \tau\,x^a_*$ with 
some $x^a_*$ satisfying $\delta_{ab}\,x^a_*\,x^b_* = 1$. 
The equation
\[
i_{Y}\,\Theta^a = 0,
\]
implies then for the unknown
$\hat{\sigma}^a\,_b = \sigma^a\,_b - \delta^a\,_b$,  
which is required to be smooth, the initial condition
\begin{equation}
\label{sigmahatati}
\hat{\sigma}^a\,_b \rightarrow 0 \quad \mbox{as} \quad 
\tau \rightarrow 0,
\end{equation} 
and along $\gamma$ the ODE
\begin{equation}
\label{firststrequODE}
\frac{d}{d\tau}\,\hat{\sigma}^a\,_b + \frac{1}{\tau}\,\hat{\sigma}^a\,_b
= \Gamma_c\,^a\,_d\,\hat{\sigma}^c\,_b\,x^d_*
+ \Gamma_b\,^a\,_d\,x^d_*,
\end{equation}
where we expressed the connection form in terms of the connection
coefficients, which are considered now as unknowns.
The equation
\[
i_{Y}\,\Delta^a\,_b = 0, 
\]
similarly implies the initial condition and the ODE 
\begin{equation}
\label{Gammaati}
\Gamma_c\,^a\,_b  \rightarrow 0 \quad \mbox{as} \quad 
\tau \rightarrow 0,
\end{equation}
\begin{equation}
\label{secstrequODE}
\frac{d}{d\tau}\,\Gamma_c\,^a\,_b 
+ \frac{1}{\tau}\,(\delta^d\,_c - \hat{\sigma}^d\,_f\,e^f\,_c)\,
\Gamma_d\,^a\,_b 
= - \Gamma_d\,^a\,_b\,\Gamma_c\,^d\,_e\,x^e_*
+ R^a\,_{bdc}\,x^d_*,
\end{equation}
where the frame coefficients $e^a\,_b$ (by
Cramer's rule rational functions of the $\sigma^a\,_b$)
satisfy
\[
\sigma^a\,_c\,e^c\,_b = \delta^a\,_b \quad \mbox{so that} \quad
e^a\,_b = \delta^a\,_b - \hat{\sigma}^a\,_c\,e^c\,_b =
\delta^a\,_b + O(|\tau|).
\]
The equations
\begin{equation}
\label{regODE}
i_{Y}\,\Lambda = 0,\,\,\, i_{Y}\,\Gamma = 0, \,\,\,
i_{Y}\,\Sigma_a = 0, \,\,\,i_{Y}\,\Pi_a = 0,\,\,\,
i_{Y}\,S = 0, \,\,\,i_{Y}\,T = 0, \,\,\,i_{Y}\,B = 0,
\end{equation}
imply regular ODE's along $\gamma$. The unknowns in these equations must
satisfy the initial conditions
\begin{equation}
\label{Uati}
\rho = 0,\,\, \rho_a = 0, \,\, s = - 2\,\mu, \,\, u = \nu > 0, \,\, 
u_a = c_a \neq 0,\,\,t = \frac{c_a c^a}{2\,\nu}, 
\,\, \beta = \beta_* \neq 0
\quad \mbox{at} \quad \tau = 0.
\end{equation} 

{\it There exists along $\gamma$ a unique solution ${\bf U}$ to the
system consisting of (\ref{firststrequODE}), (\ref{secstrequODE}), and the ODEs
implied by (\ref{regODE})  which satisfies the conditions
(\ref{ihngauge}), (\ref{sigmahatati}),
(\ref{Gammaati}), (\ref{Uati}). This solution is real analytic in the initial
data (\ref{Uati}) and $\tau\,x^a_*$. The fields given by ${\bf U}$ are in fact
analytic in the coordinates $x^a$ and satify the equation ${\bf \Psi} = 0$
near $x^a = 0$}.

This result follows from general properties of systems of ODE's.
The only subtlety arises here from the singularity of equations
(\ref{firststrequODE}), (\ref{secstrequODE}) at $\tau = 0$. It is such that 
the left hand sides of these equations are of the form
$\dot{x} + \tau^{-1}\,A\,x$ with a matrix $A$
which approaches a diagonal matrix with positive entries as
$\tau \rightarrow 0$. The existence of an unique analytic 
solution to the complete system then follows immediately with the
methods used in \cite{friedrich:statconv} where ODE's of the same type
have been discussed.

Once the solution ${\bf U}$ has been obtained, equations
(\ref{firststrequODE}), (\ref{secstrequODE}) imply a system of ODE's for
$x^a_*\,\hat{\sigma}^a\,_b$ and $x^a_*\,\Gamma_a\,^b\,_c$ along $\gamma$
which allows one to concluce that these quantities vanish everywhere so that
the relations (\ref{ihngauge}) are indeed satisfied. 
Because $i_{Y}\,{\bf \Psi} \rightarrow 0$ as $x^a
\rightarrow 0$ along any geodesic passing through $x^a = 0$ it
follows that ${\bf \Psi} = 0$ at $x^a = 0$. 
From (\ref{dPsirel}), (\ref{ihngauge}), and $i_{Y}\,{\bf \Psi} = 0$ one obtains 
\[
{\cal L}_{Y} {\bf \Psi}^A = (d \circ i_{Y} + i_{Y} \circ d) {\bf \Psi}^A
= i_{Y} \,d\,{\bf \Psi}^A
= f^A_{\,B\,c}\,\frac{x^c}{|x|} \,{\bf \Psi}^B.
\]
This equation implies for $|x| \neq 0$ along each geodesic $\gamma$ passing
through $x^a = 0$ a linear homogeneous system of ODE's for the coefficient
functions defining the forms given by ${\bf \Psi}$. It behaves
regularly on $\gamma$ as $|x| \rightarrow 0$. 
Since ${\bf \Psi}$ vanishes at $x^a = 0$, the asserted result follows.
\end{proof}

\vspace{.3cm}

In counting the free parameters in lemma \ref{du(i)neq0}, the vector $c^a$
should be taken into account only in terms of the number $c = \sqrt{c_ac^a}
> 0$ because of the freedom to rotate the normal coordinates around
their origin. Ignoring the parameter $\mu$, which can be changed by trivial
rescalings, it follows the that solutions depend on three parameters.

\section{Concluding remarks}

The statements of theorem \ref{exist3parameter-set} represent an
extract from the results proven in the lemmas and the subsequent remarks. The
latter provide much more information on the various structures. This will
become important in a sequel to this article in which the properties of the
solutions whose existence has been obtained in theorem
\ref{exist3parameter-set} will be discussed in some detail.

}


\begin{thebibliography}{9}

\bibitem{beig:1991}
R. Beig.
\newblock Conformal properties of static spacetimes.
\newblock {\em Class. Quantum Grav.} 8 (1991) 263 - 271.

\bibitem{beig:simon}
R. Beig, W. Simon.
\newblock Proof of a multipole conjecture due to Geroch.
\newblock {\em Comm. Math. Phys.} 78 (1980) 75 - 82.

\bibitem{brinkmann}
H. W. Brinkmann.
\newblock Einstein spaces which are mapped conformally on each other.
\newblock {\em Math. Ann.} 94 (1925) 119 - 145.

\bibitem{friedrich:cargese}
H. Friedrich.
\newblock Smoothness at null infinity and the structure of initial data.
\newblock In: P. T. Chru\'sciel, H. Friedrich (eds.):
{\it The Einstein equations and the large scale behaviour of gravitational fields.}
\newblock Birkh\"auser, Basel, 2004.

\bibitem{friedrich:statconv}
H. Friedrich.
\newblock Static vacuum solutions
from convergent null data expansions at space-like infinity.
\newblock {\em Ann. Henri Poincare} 8 (2007) 817 - 884.

\bibitem{hoermander:I}
L. H\"ormander.
\newblock The analysis of linear partial differential operators I.
\newblock Springer, Berlin 1983.

\bibitem{kennefick:o'murchadha}
D. Kennefick, N. O'Murchadha. 
\newblock Weakly decaying asymptotically flat static and
stationary solutions to the Einstein equations. 
\newblock { Class. Quantum. Grav.} 12 (1995) 149--158.

\bibitem{stephani:et:al}
H. Stephani, D. Kramer, M. MacCallum, C. Hoenselaers, E. Herlt.
\newblock Exact solutions of Einstein's field equations.
\newblock Cambridge University Press, Cambridge, 2003.

\bibitem{yano}
K. Yano.
\newblock The theory of Lie derivatives and its applications.
\newblock Amsterdam, North Holland, 1957.







\end{thebibliography}
\end{document}